\newtheorem{theorem}{Theorem}
\newtheorem{lemma}{Lemma}[section]
\newtheorem{corollary}{Corollary}[section]
\title{Orthogonal decomposition of Lorentz transformations}
\author{jason hanson
        \thanks{\tt jhanson$@$digipen.edu}}
\date{DigiPen Institute of Technology}
\begin{document}
\maketitle
\begin{abstract}
The canonical decomposition of a Lorentz algebra element into a sum of orthogonal simple (decomposable) Lorentz bivectors is defined and discussed.  This decomposition on the Lie algebra level leads to a natural decomposition of a proper orthochronous Lorentz transformation into a product of commuting Lorentz transformations, each of which is the exponential of a simple bivector.  While this later result is known, we present novel formulas that are independent of the form of the Lorentz metric chosen.  As an application of our methods, we obtain an alternative method of deriving the formulas for the exponential and logarithm for Lorentz transformations.
\end{abstract}

\section{Introduction}

It is common to define a three--dimensional rotation by specifying its axis and rotation angle.  However, since the axis of rotation is fixed, we may also define the rotation in terms of the two--plane orthogonal to the axis.  This point of view is useful in higher dimensions, where one no longer has a single rotation axis.  Indeed we may define a rotation in $n$--dimensional space by choosing a two--plane and a rotation angle in that plane; vectors orthogonal to the plane are fixed by the rotation.  The most general rotation, however, is a composition of such planar, or {\em simple}, rotations.  Moreover, the planes of rotation may be taken to be mutually orthogonal to each other, so that the simple factors commute with one another.

On the Lie algebra level, where rotations correspond to antisymmetric matrices, a simple rotation corresponds to a {\em decomposable} matrix; i.e., a matrix of the form $u\wedge v=uv^T-vu^T$, where $u,v$ are (column) vectors.  Geometrically, the vectors $u,v$ span the plane of rotation.  A general rotation corresponds to the sum of decomposable matrices: $A=\sum_iu_i\wedge v_i$, and the mutual orthogonality of the two--planes of rotation corresponds to the mutual annihilation of summands: $(u_i\wedge v_i)(u_j\wedge v_j)=0$ for $i\neq j$.  Note that mutual annihilation trivially implies commutation, so that $\exp(\sum_iu_i\wedge v_i$$)=\prod_i\exp(u_i\wedge v_i)$, and we recover algebraically the commutation of simple factors in a rotation.

In the particular case of four--dimensional Euclidean space, a general rotation is the composition of two simple commuting rotations in orthogonal two--planes.  Correspondingly, a $4\times 4$ antisymmetric matrix can be written as the sum of two mutually annihilating decomposable matrices.  It is possible (although apparently not of sufficient interest to be publishable) to write down explicit formulas for this decomposition --- both on the Lie algebra level and on the level of rotations.  It is natural to ask if such an orthogonal decomposition is possible for Lorentz transformations.  This article answers this question in the affirmative.  If one assumes a Minkowski metric and allows the use of complex numbers, a Wick rotation may be applied to the formulas for four--dimensional Euclidean rotations.  However our discussion will not proceed along this line; instead, we will limit ourselves to algebra over the reals (see below) and develop the formulas from first principles, without any assumptions on the specific form of the Lorentz metric.

It should be noted that the orthogonal decomposition of a Lorentz transformation $\Lambda$ into commuting factors is distinct from the usual polar decomposition $\Lambda=BR$, where $B$ is a boost and $R$ is a rotation.  While $B$ and $R$ are simple in the sense that they are exponentials of decomposable Lie algebra elements: $B=\exp(b)$ and $R=\exp(r)$, they do not necessarily commute, nor do the Lie algebra elements $b,r$ annihilate each other.  In particular, $\exp(b+r)\neq\exp(b)\exp(r)$ in general.

From the point of view of the geometry of Lorentz transformations, there is nothing new presented in this work.  It is known that any proper orthochronous Lorentz transformation is the product of commuting factors which act only on orthogonal two--planes (\cite{Schremp}).  That the corresponding Lorentz algebra element can be written as the sum of mutually annihilating simple elements is a natural consequence.  However, the formulas given here, which actually allow one to compute these decompositions using only basic arithmetic operations, appear to be original.  Moreover they are not metric--specific, and so are applicable to any space--time manifold in a natural way, without the need to use Riemann normal coordinates.

Since it may seem oddly unnecessary to restrict ourselves only to algebra over the reals, we briefly explain our motivation for this.  The author is interested in efficient numerical interpolation of Lorentz transformations, with a view towards a real--time relativistic simulation.  That is, we are given (proper) Lorentz transformations $\Lambda_0$ and $\Lambda_1$, which we think of as giving coordinate transformations, measured at times $t=0$ and $t=1$, between an observer and an object in nonuniform motion.  For any time $t$ with $0\leq t\leq 1$, we would like to find a Lorentz transformation $\Lambda(t)$ that is ``between'' $\Lambda_0$ and $\Lambda_1$.  Analogous to linear interpolation, we may use geodesic interpolation in the manifold of all Lorentz transformations: $\Lambda(t)=\Lambda_0\exp(t\log{\Lambda_0^{-1}\Lambda_1})$.  Thus we need to efficiently compute the logarithm and exponential for Lorentz transformations.  Although many computer languages offer arithmetic over the complex numbers, not all do, and in any case they are typically not supported at the hardware level, so their usage introduces additional computational overhead.

\section{Preliminaries}

\subsection{Generalized trace}\label{sec:traces}

We will make use of the {\bf $k$--th order trace} ${\rm tr}_kM$ of a $n\times n$ matrix $M$, which is defined by the identity
\begin{equation}\label{eq:ktr}
  \det(I+tM)=\sum_{k=0}^n({\rm tr}_kM)t^k.
\end{equation}
Equivalently, the $(n-k)$--th degree term of the characteristic polynomial $\det(\lambda I-M)$ of $M$ has coefficient $(-1)^k{\rm tr}_kM$.  In particular, the zeroth order trace is unity, the first order trace is the usual trace, and the $n$--th order trace is the determinant.  For $k\neq 1$, ${\rm tr}_kM$ is not linear.  However, the identities (i) ${\rm tr}_kM^T={\rm tr}_kM$, (ii) ${\rm tr}_kMN={\rm tr}_kNM$, and (iii) ${\rm tr}_k\alpha M=\alpha^k{\rm tr}_kM$ for any scalar $\alpha$, are always satisfied for any $k$.  These properties follow from equation \eqref{eq:ktr} and the basic properties of determinants.  Of primary interest is the second order trace, for which we have the explicit formula
\begin{equation}\label{eq:2tr}
  {\rm tr}_2M=\tfrac{1}{2}({\rm tr}^2M-{\rm tr}M^2)
\end{equation}
This follows by computing the second derivative of equation \eqref{eq:ktr} and evaluating at $t=0$.  In engineering, ${\rm tr}_2M$ is often denoted by ${\rm II}_M$.

\subsection{Antisymmetry and decomposability}

In preparation for the next section, let us derive a few facts that we will need concerning $4\times 4$ antisymmetric matrices.  Such a matrix can be written in the form
\begin{equation}\label{eq:skew}
   A_{\bf xy}=\begin{pmatrix}
                0        & {\bf x}^T\\
                -{\bf x} & W_{\bf y}
              \end{pmatrix}
  \quad\text{where}\quad
  W_{\bf y}=\begin{pmatrix}
                 0 & -y_3 &  y_2\\
               y_3 &    0 & -y_1\\
              -y_2 &  y_1 &    0
            \end{pmatrix}
\end{equation}
for some ${\bf x},{\bf y}\in{\mathbb R}^3$.  Here $W_{\bf y}$ is chosen so that it has the defining property $W_{\bf y}{\bf v}={\bf y}\times{\bf v}$.

By computing the determinant of \eqref{eq:skew} directly, we obtain the following, which is a special case of the general fact that the determinant of an antisymmetric matrix (in any dimension) is the square of its Pfaffian.

\begin{lemma}\label{lem:skewdet}
$\det{A_{\bf xy}}=({\bf x}\cdot{\bf y})^2$\qed
\end{lemma}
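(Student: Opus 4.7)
The plan is a direct determinant computation that exploits the vanishing $(1,1)$ entry of the matrix in \eqref{eq:skew}. First I would expand $\det A_{\bf xy}$ by cofactors along the first row; since $a_{11}=0$, exactly three terms survive, each of the form $\pm x_j M_{1j}$, where $M_{1j}$ is the $3\times 3$ minor obtained by deleting row $1$ and column $j{+}1$. Every such minor has one column equal to $-{\bf x}$ and two columns drawn from $W_{\bf y}$.

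Second, I would expand each $M_{1j}$ once more, most cleanly along the $-{\bf x}$ column. This contributes a second factor $x_i$ and reduces the remaining $2\times 2$ subdeterminants to simple products of components of ${\bf y}$, using the explicit antisymmetric shape of $W_{\bf y}$. A short check should reveal the pattern $M_{1j} = \pm\, y_j\,({\bf x}\cdot {\bf y})$, so that after reassembling the row expansion the three pieces combine into
\[
\det A_{\bf xy} \;=\; (x_1 y_1 + x_2 y_2 + x_3 y_3)\,({\bf x}\cdot{\bf y}) \;=\; ({\bf x}\cdot{\bf y})^2,
\]
which is the claim.

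The only real obstacle is sign bookkeeping across the two nested cofactor expansions; there is no conceptual difficulty, but a misplaced sign easily spoils the final factoring. A useful sanity check---and a free proof of the degenerate case---is the observation that when ${\bf x}\cdot{\bf y} = 0$ the vector $(0,{\bf y})^T\in{\mathbb R}^4$ lies in the kernel of $A_{\bf xy}$, since ${\bf x}^T{\bf y} = 0$ and $W_{\bf y}{\bf y} = {\bf y}\times{\bf y} = \mathbf{0}$; hence $A_{\bf xy}$ is singular and $\det A_{\bf xy} = 0$, consistent with the formula. Alternatively, one could invoke the general identity $\det A = (\mathrm{Pf}\, A)^2$ combined with the $4\times 4$ Pfaffian formula $\mathrm{Pf}\,A = a_{12}a_{34}-a_{13}a_{24}+a_{14}a_{23}$, which reads off $-{\bf x}\cdot{\bf y}$ from \eqref{eq:skew} at once; but since the lemma is presented as the specialization that motivates the Pfaffian identity, the direct expansion is preferable here.
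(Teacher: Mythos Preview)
Your proposal is correct and matches the paper's approach exactly: the paper gives no proof body beyond the prefatory remark ``by computing the determinant of \eqref{eq:skew} directly'' and the trailing \qed, so your two-step cofactor expansion is precisely the omitted computation. Your observation that each $3\times 3$ minor factors as $\pm y_j({\bf x}\cdot{\bf y})$ is the clean way to organize it, and the Pfaffian remark you include is the same general context the paper cites just before the lemma.
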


We may identify the {\em wedge (exterior) product} of four--vectors $u,v$ with the matrix $u\wedge v\doteq uv^T-vu^T$.  This matrix is evidently antisymmetric, and if we write $u=(u_0,{\bf u})$, $v=(v_0,{\bf v})$, we find that $u\wedge v=A_{\bf xy}$, where ${\bf x}=u_0{\bf v}-v_0{\bf u}$ and ${\bf y}={\bf v}\times{\bf u}$.  In particular, lemma \ref{lem:skewdet} implies that $\det{u\wedge v}=0$.  This is again a special case of a general fact: the determinant of a wedge is zero in dimensions greater than two, since we can always find a vector orthogonal to the vectors $u,v$.  In dimension four, the converse is also true.

\begin{lemma}\label{lem:wedge}
$A_{\bf xy}$ is a wedge--product if and only if $\det{A_{\bf xy}}=0$.
\end{lemma}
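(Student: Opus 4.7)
The forward direction has essentially been observed already in the paragraph preceding the statement: if $A_{\bf xy}=u\wedge v$, then its determinant is the square of the Pfaffian of a wedge (or, equivalently, apply Lemma \ref{lem:skewdet} after computing ${\bf x}$ and ${\bf y}$ from $u,v$), and one verifies that ${\bf x}\cdot{\bf y}=0$ in that case, hence $\det A_{\bf xy}=0$.

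The content of the lemma is the converse. By Lemma \ref{lem:skewdet}, the hypothesis $\det A_{\bf xy}=0$ is equivalent to the algebraic condition ${\bf x}\cdot{\bf y}=0$. Given this, the plan is simply to write down four--vectors $u=(u_0,{\bf u})$ and $v=(v_0,{\bf v})$ that realize $A_{\bf xy}=u\wedge v$, i.e., that satisfy the two identities ${\bf x}=u_0{\bf v}-v_0{\bf u}$ and ${\bf y}={\bf v}\times{\bf u}$ recorded before the lemma. Assuming ${\bf x}\neq 0$, I would try the ansatz
\[
   v=(0,{\bf x}),\qquad u=\Bigl(1,\tfrac{{\bf y}\times{\bf x}}{\|{\bf x}\|^{2}}\Bigr).
\]
The first identity reduces to ${\bf x}=1\cdot{\bf x}-0\cdot{\bf u}$, which is automatic. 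For the second, compute ${\bf v}\times{\bf u}={\bf x}\times({\bf y}\times{\bf x})/\|{\bf x}\|^{2}$ and expand the triple product; the term proportional to ${\bf x}$ carries the factor ${\bf x}\cdot{\bf y}$, which vanishes by hypothesis, leaving exactly ${\bf y}$.

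To finish I would dispose of the degenerate case ${\bf x}=0$: if additionally ${\bf y}=0$ then $A_{\bf xy}=0=0\wedge 0$, while if ${\bf y}\neq 0$ I pick any two vectors ${\bf u},{\bf v}\in{\mathbb R}^{3}$ with ${\bf v}\times{\bf u}={\bf y}$ and set $u_0=v_0=0$, which makes ${\bf x}=u_0{\bf v}-v_0{\bf u}=0$ automatic. The only real step, then, is the triple product identity above; the rest is bookkeeping. The main (mild) obstacle is ensuring that the construction covers the cases in which ${\bf x}$ or ${\bf y}$ vanishes, which is why I split off ${\bf x}=0$ and pick a denominator that is nonzero in the generic case.
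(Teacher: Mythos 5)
Your proof is correct and follows essentially the same strategy as the paper's: explicitly construct $u,v$ realizing $A_{\bf xy}=u\wedge v$, with a case split to handle degeneracy. The paper splits on whether ${\bf y}=0$ (choosing ${\bf u},{\bf v}$ with ${\bf v}\times{\bf u}={\bf y}$ and then solving linearly for $u_0,v_0$ using ${\bf x}\cdot{\bf y}=0$), whereas you split on whether ${\bf x}=0$ and give a fully explicit ansatz $u=(1,({\bf y}\times{\bf x})/\|{\bf x}\|^2)$, $v=(0,{\bf x})$ verified by the triple--product identity; both are equally valid and comparable in length.
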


\begin{proof}
For the sufficiency, first assume that ${\bf y}\neq{\bf 0}$.  Choose vectors ${\bf u},{\bf v}$ such that ${\bf v}\times{\bf u}={\bf y}$.  Since ${\bf x}\cdot{\bf y}=0$ (lemma \ref{lem:skewdet}), we may write ${\bf x}=u_0{\bf v}-v_0{\bf u}$ for some scalars $u_0,v_0$.  If ${\bf y}={\bf 0}$, then take $u=(1,{\bf 0})$ and $v=(0,{\bf x})$.
\end{proof}

In general, the vector space of two--forms in $n$ dimensions can be identified with the space of $n\times n$ antisymmetric matrices.  In dimensions $n\geq 4$, not every two--form is {\em decomposable;} i.e., the wedge of two one--forms.  Lemma \ref{lem:wedge} gives a simple test for decomposability in dimension $n=4$.

\section{Lorentz algebra element decomposition}

Let $g$ be a Lorentz inner product on ${\mathbb R}^4$; i.e., $g$ is symmetric, nondegenerate, and $\det{g}<0$.  Recall that the {\bf Lorentz algebra} $so(g)$ is the collection of all linear transformations $L$ on ${\mathbb R}^4$ for which $L^Tg+gL=0$; or equivalently, $L^T=-gLg^{-1}$.  We will refer to an element of $so(g)$ as a {\bf bivector}.

\subsection{Algebra invariants and properties}

\begin{lemma}\label{lem:antisym}
$Ag\in so(g)$ if and only if $A$ is antisymmetric.
\end{lemma}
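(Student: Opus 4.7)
The plan is to substitute $L = Ag$ directly into the defining relation $L^T g + g L = 0$ for $so(g)$ and simplify, using the two key properties of $g$: symmetry and invertibility (nondegeneracy).

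First I would compute $L^T = (Ag)^T = g^T A^T = g A^T$, where the last equality uses that $g$ is symmetric. Substituting into the defining equation then yields
\begin{equation*}
  L^T g + gL = g A^T g + g A g = g\bigl(A^T + A\bigr) g.
\end{equation*}
Since $g$ is nondegenerate it is invertible, so multiplying on the left and right by $g^{-1}$ shows that $L^T g + g L = 0$ is equivalent to $A^T + A = 0$, i.e. to $A$ being antisymmetric. This gives both directions of the ``if and only if'' simultaneously.

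There is no real obstacle here; the lemma is essentially a one-line consequence of the definition of $so(g)$, the symmetry of $g$, and the cancellation of the invertible factor $g$ on either side. The only thing to be careful about is not to accidentally invoke $g^{-1}$ where it is not needed, and to record clearly that symmetry of $g$ is what lets us move the transpose past $g$ in the computation of $L^T$.
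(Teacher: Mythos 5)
Your proof is correct and is essentially identical to the paper's: both substitute $L=Ag$ into $L^Tg+gL=0$, use symmetry of $g$ to obtain $g(A^T+A)g$, and invoke invertibility of $g$ to conclude this vanishes iff $A^T+A=0$. The paper's version is just more terse.
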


\begin{proof}
Since $g$ is symmetric, $(Ag)^Tg+g(Ag)=g(A^T+A)g$, which is zero if and only if $A^T=-A$.
\end{proof}

Thus any Lorentz bivector $L$ can be written in the form $L=Ag$ for some antisymmetric matrix $A$.  Although this identification gives us a vector space isomorphism between $so(g)$ and the space of all antisymmetric $4\times 4$ matrices, the identification is not compatible with their respective Lie algebra structures; i.e., does not give a Lie algebra isomorphism.

\begin{lemma}\label{lem:Ltr13}
For any $L\in so(g)$, ${\rm tr}L=0={\rm tr}_3L$.
\end{lemma}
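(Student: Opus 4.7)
The plan is to exploit the defining relation for Lorentz bivectors together with the three general properties of $\text{tr}_k$ listed in section \ref{sec:traces}: transpose invariance, cyclicity, and homogeneity of degree $k$ under scalar multiplication. The statement is really a special case of a more general observation, namely that for any $L\in so(g)$ all \emph{odd} generalized traces vanish, and the cases $k=1,3$ are exactly what is required here.

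First I would rewrite the Lorentz algebra condition as $L^T=-gLg^{-1}$, which is immediate from $L^Tg+gL=0$. This expresses $L^T$ as a scaled conjugate of $L$, which is precisely the form that plays nicely with the three properties of $\text{tr}_k$.

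Next I would compute, for any $k$,
\begin{equation*}
  \text{tr}_kL \;=\; \text{tr}_kL^T \;=\; \text{tr}_k(-gLg^{-1}) \;=\; (-1)^k\,\text{tr}_k(gLg^{-1}) \;=\; (-1)^k\,\text{tr}_kL,
\end{equation*}
using property (i) in the first equality, property (iii) with $\alpha=-1$ in the third, and property (ii) (cyclicity, applied to the product $(gL)g^{-1}$) in the fourth. For $k=1$ and $k=3$ the factor $(-1)^k$ equals $-1$, so $2\,\text{tr}_kL=0$ and hence $\text{tr}_kL=0$, yielding both conclusions at once.

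There is no real obstacle: the only thing to be slightly careful about is making sure the cyclic property (ii) is invoked correctly on the triple product $gLg^{-1}$ (grouping as $(gL)(g^{-1})$ and cycling to $(g^{-1})(gL)=L$), and that the sign pulled out by property (iii) is $(-1)^k$ rather than $-1$. Once these are in hand the argument is a two-line chain of equalities.
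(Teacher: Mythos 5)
Your argument is correct and is essentially identical to the paper's proof, which also runs the chain $\mathrm{tr}_kL=\mathrm{tr}_kL^T=\mathrm{tr}_k(-gLg^{-1})=\mathrm{tr}_k(-L)=(-1)^k\mathrm{tr}_kL$ and reads off the odd cases; you merely factor out the $(-1)^k$ before applying cyclicity rather than after. Nothing to add.
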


\begin{proof}
${\rm tr}_kL={\rm tr}_kL^T={\rm tr}_k(-gLg^{-1})={\rm tr}_k(-L)=(-1)^k{\rm tr}_kL$.
\end{proof}

\begin{lemma}\label{lem:detL}
$\det{L}\leq 0$ for all $L\in so(g)$.
\end{lemma}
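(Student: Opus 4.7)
The plan is very short: reduce $\det L$ to a product of two determinants with known sign. Since $L\in so(g)$, Lemma \ref{lem:antisym} lets me write $L=Ag$ for some antisymmetric $4\times 4$ matrix $A$, so
\[
  \det L=\det(Ag)=(\det A)(\det g).
\]
By hypothesis $\det g<0$, so it suffices to show $\det A\geq 0$.

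For that I would invoke Lemma \ref{lem:skewdet}: writing $A=A_{\mathbf{xy}}$ in the form \eqref{eq:skew}, we have $\det A=(\mathbf{x}\cdot\mathbf{y})^2$, which is manifestly nonnegative. (This is the $4\times 4$ instance of the Pfaffian-squared identity mentioned just before Lemma \ref{lem:skewdet}, so no further work is needed.) Combining the two facts gives $\det L=(\mathbf{x}\cdot\mathbf{y})^2\det g\leq 0$.

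There is no real obstacle; the only thing to be careful about is making sure that the $A$ produced by Lemma \ref{lem:antisym} is genuinely antisymmetric so that Lemma \ref{lem:skewdet} applies, which is immediate from the statement of Lemma \ref{lem:antisym}. The proof is essentially one line once those two earlier lemmas are cited.
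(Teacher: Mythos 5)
Your proof is correct and is the same as the paper's: write $L=Ag$ with $A$ antisymmetric, factor the determinant, and apply Lemma~\ref{lem:skewdet} together with $\det g<0$. You simply spell out the intermediate steps that the paper leaves implicit.
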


\begin{proof}
Write $L=Ag$.  Then $\det{L}=\det{A}\,\det{g}\leq 0$, by lemma \ref{lem:skewdet}.
\end{proof}

By the comments in section \ref{sec:traces}, the characteristic equation for a Lorentz bivector $L$ is
\begin{equation}\label{eq:charL}
  \lambda^4+({\rm tr}_2L)\lambda^2+\det{L}=0.
\end{equation}
The squared--roots (that is, the roots of $x^2+({\rm tr}_2L)x+\det{L}=0$) are
\begin{equation}\label{eq:eigenL}
  \mu_\pm=\tfrac{1}{2}\bigl(-{\rm tr}_2L
            \pm\sqrt{{\rm tr}_2^2L-4\det{L}}\bigr)
\end{equation}
Since $\det{L}\leq 0$, $\mu_\pm$ are real numbers.  Note that they are solutions to the equations $\mu_++\mu_-=-{\rm tr}_2L$ and $\mu_+\mu_-=\det{L}$, and that $\mu_+\geq 0$ and $\mu_-\leq 0$.  Moreover, $\det{L}\neq 0$ if and only if $\mu_+>0$ and $\mu_-<0$.

\begin{lemma}\label{lem:4}
$L^4+({\rm tr}_2L)L^2+(\det{L})I=0$.
\end{lemma}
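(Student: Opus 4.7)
The plan is to invoke the Cayley--Hamilton theorem directly on $L$. By the discussion in section \ref{sec:traces}, the characteristic polynomial of an $n\times n$ matrix $M$ can be read off from equation \eqref{eq:ktr} with appropriate sign changes: its $(n-k)$--th degree coefficient is $(-1)^k{\rm tr}_kM$. Applied to $L$ with $n=4$, this gives
\begin{equation*}
  \det(\lambda I-L)=\lambda^4-({\rm tr}L)\lambda^3+({\rm tr}_2L)\lambda^2-({\rm tr}_3L)\lambda+\det L.
\end{equation*}

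By Lemma \ref{lem:Ltr13}, both ${\rm tr}L$ and ${\rm tr}_3L$ vanish, which collapses the characteristic polynomial to precisely equation \eqref{eq:charL}. The Cayley--Hamilton theorem then says that $L$ satisfies its own characteristic polynomial as a matrix identity, which yields the claimed relation $L^4+({\rm tr}_2L)L^2+(\det{L})I=0$.

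There is essentially no obstacle here: the lemma is just the Cayley--Hamilton theorem specialized to the class of matrices whose odd--order traces $\text{tr}_1$ and $\text{tr}_3$ vanish, and that vanishing was already established in Lemma \ref{lem:Ltr13}. The only thing worth being careful about is keeping the signs in the characteristic polynomial straight, which is handled by noting that $\det(\lambda I - L)$ differs from $\det(I + tL)$ with $t = -1/\lambda$ only by the overall factor $\lambda^n$, so the alternating signs $(-1)^k$ appear in front of each ${\rm tr}_k L$.
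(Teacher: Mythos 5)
Your argument is correct and is exactly the paper's proof: write down the characteristic polynomial of $L$ via the generalized traces, observe that Lemma \ref{lem:Ltr13} kills the ${\rm tr}_1$ and ${\rm tr}_3$ terms to give equation \eqref{eq:charL}, and then apply the Cayley--Hamilton theorem. No difference in substance.
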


\begin{proof}
This follows from equation \eqref{eq:charL} and the Cayley--Hamilton theorem.
\end{proof}

The determinant and second order trace of a Lorentz bivector may be computed from the traces of its powers.

\begin{lemma}\label{lem:pow}
${\rm tr}_2L=-\tfrac{1}{2}{\rm tr}L^2$ and $\det{L}=\tfrac{1}{8}({\rm tr}^2L^2-2\,{\rm tr}L^4)$.
\end{lemma}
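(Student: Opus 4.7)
My plan is to derive both identities from the general trace formula \eqref{eq:2tr} and the Cayley--Hamilton relation in lemma \ref{lem:4}, using the vanishing traces from lemma \ref{lem:Ltr13} to eliminate the unwanted terms.

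For the first identity, I would start from the general formula \eqref{eq:2tr} applied to $L$:
\[
 {\rm tr}_2 L = \tfrac{1}{2}({\rm tr}^2 L - {\rm tr}\, L^2).
\]
Since ${\rm tr}\, L = 0$ by lemma \ref{lem:Ltr13}, the first term drops out and we get ${\rm tr}_2 L = -\tfrac{1}{2}{\rm tr}\, L^2$ immediately. This step is essentially free.

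For the second identity, the natural tool is lemma \ref{lem:4}, which states that $L^4 + ({\rm tr}_2 L)L^2 + (\det L)I = 0$. I would simply take the trace of both sides. Using linearity of ${\rm tr}$ and ${\rm tr}\, I = 4$, this yields
\[
 {\rm tr}\, L^4 + ({\rm tr}_2 L)\,{\rm tr}\, L^2 + 4\det L = 0.
\]
Substituting the first identity ${\rm tr}_2 L = -\tfrac{1}{2}{\rm tr}\, L^2$ into the middle term and solving for $\det L$ gives
\[
 \det L = \tfrac{1}{4}\bigl(\tfrac{1}{2}({\rm tr}\, L^2)^2 - {\rm tr}\, L^4\bigr) = \tfrac{1}{8}({\rm tr}^2 L^2 - 2\,{\rm tr}\, L^4),
\]
as desired.

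There is really no obstacle here; both statements fall out of formulas already established. The only point requiring care is bookkeeping: making sure to invoke ${\rm tr}\, L = 0$ at the right place for the first identity, and then feeding that result into the trace of the Cayley--Hamilton identity for the second. No calculation beyond these two substitutions is needed.
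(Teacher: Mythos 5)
Your proof is correct and follows exactly the same route as the paper: the first identity from equation \eqref{eq:2tr} together with ${\rm tr}\,L=0$ (lemma \ref{lem:Ltr13}), and the second by taking the trace of the Cayley--Hamilton relation in lemma \ref{lem:4} and substituting the first identity. Nothing to add.
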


\begin{proof}
Lemma \ref{lem:Ltr13} and equation \eqref{eq:2tr} imply the first identity.  For the second, lemma \ref{lem:4} implies ${\rm tr}L^4+{\rm tr}_2L\,{\rm tr}L^2+4\det{L}=0$.
\end{proof}

\subsection{Simple bivectors}

Given two four--vectors $u,v$, we may construct the {\bf simple Lorentz bivector} $u\wedge^g v$, which is the $4\times 4$ matrix
\begin{equation}\label{eq:bivector}
  u\wedge^g v\doteq uv^Tg-vu^Tg
\end{equation}
In covariant coordinate notation, $(u\wedge^gv)_\alpha^\beta=u^\beta v_\alpha-v^\beta u_\alpha$.  Observe that $(u\wedge^g v)=(u\wedge v)g$, whence lemma \ref{lem:antisym} implies that $u\wedge^g v\in so(g)$.  

\begin{lemma}\label{lem:simple}
$L\in so(g)$ is simple if and only if $\det{L}=0$.
\end{lemma}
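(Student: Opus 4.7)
The plan is to reduce the question about simplicity of the Lorentz bivector $L$ to the already-established wedge characterization of ordinary antisymmetric matrices (Lemma \ref{lem:wedge}), by passing through the identification $L = Ag$ with $A$ antisymmetric.

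First I would write $L = Ag$, which is legitimate by Lemma \ref{lem:antisym}; concretely, take $A = Lg^{-1}$. Then $\det L = \det A \cdot \det g$, and since $g$ is nondegenerate we have $\det g \neq 0$, so $\det L = 0$ if and only if $\det A = 0$. This is the pivotal reduction: simplicity of $L$ and vanishing of $\det L$ are both being converted into statements about the antisymmetric matrix $A$.

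Next I would unwind the definition of a simple Lorentz bivector. By construction $u \wedge^g v = (u \wedge v)g$, so $L = u \wedge^g v$ for some $u,v$ precisely when $A = u \wedge v$ for those same $u,v$; that is, $L$ is simple exactly when $A$ is a wedge-product in the ordinary antisymmetric sense. Lemma \ref{lem:wedge} then says $A$ is a wedge-product if and only if $\det A = 0$, which combined with the previous paragraph gives the claim.

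There is no real obstacle here; everything is essentially bookkeeping once one observes that right-multiplication by $g$ is an invertible linear map intertwining ordinary wedges with Lorentz wedges. The only mild care needed is to make sure one is invoking the correct form of Lemma \ref{lem:wedge}: it is stated in terms of the block parametrization $A_{\mathbf{x}\mathbf{y}}$, but the parametrization exhausts all $4\times 4$ antisymmetric matrices, so it applies to our $A$ without modification.
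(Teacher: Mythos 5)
Your proof is correct and follows the paper's own argument exactly: write $L = Ag$ with $A$ antisymmetric, use $\det L = \det A \cdot \det g$ with $\det g \neq 0$, and invoke Lemma \ref{lem:wedge} together with the identity $u\wedge^g v = (u\wedge v)g$. The extra bookkeeping you spell out (that right-multiplication by $g$ intertwines ordinary and Lorentz wedges, and that $A_{\mathbf{x}\mathbf{y}}$ exhausts all $4\times 4$ antisymmetric matrices) is implicit in the paper's terser version.
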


\begin{proof}
Write $L=Ag$, with $A$ antisymmetric.  Since $\det{L}=\det{A}\det{g}$, the lemma follows from lemma \ref{lem:wedge}.
\end{proof}

\begin{lemma}\label{lem:3}
$L\in so(g)$ is simple if and only if $L^3+({\rm tr}_2L)L=0$.  Moreover if $L=u\wedge^gv$, then ${\rm tr}_2L=(u^Tgu)(v^Tgv)-(u^Tgv)^2$.
\end{lemma}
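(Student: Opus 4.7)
My plan is to exploit that any simple bivector $L = u\wedge^gv$ maps $\mathbb{R}^4$ into the two-plane $\mathrm{span}(u, v)$, which reduces the cubic identity for $L$ to a Cayley--Hamilton statement on that plane. For bookkeeping I set $a = u^Tgu$, $b = v^Tgv$, $c = u^Tgv$. I would first dispatch the ``moreover'' formula: by lemma \ref{lem:pow}, $\mathrm{tr}_2L = -\tfrac{1}{2}\mathrm{tr}L^2$, so expanding $L^2 = (uv^Tg - vu^Tg)^2$ into four rank-one blocks and evaluating each trace via $\mathrm{tr}(xy^Tg) = y^Tgx$ should collapse to $\mathrm{tr}_2L = ab - c^2$.

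For the forward direction of the biconditional, suppose $L$ is simple; by lemma \ref{lem:simple}, $L = u\wedge^gv$ for some $u, v$. The decisive observation is that $Lw = u(v^Tgw) - v(u^Tgw)$ lies in $\mathrm{span}(u, v)$ for every $w$, so $L^3w = L^2(Lw)$ is determined entirely by how $L^2$ acts on $\mathrm{span}(u, v)$. A short calculation produces $Lu = cu - av$ and $Lv = bu - cv$, so $L$ restricted to this plane is represented, in the basis $\{u, v\}$, by a trace-zero $2\times 2$ matrix with determinant $ab - c^2$; Cayley--Hamilton on the restriction then gives $L^2 = -(ab - c^2)I$ on $\mathrm{span}(u, v)$. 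Combined with the formula from the first paragraph, this yields $L^3 + (\mathrm{tr}_2L)L = 0$.

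The converse is immediate from lemma \ref{lem:4}: multiplying $L^3 + (\mathrm{tr}_2L)L = 0$ by $L$ and comparing with $L^4 + (\mathrm{tr}_2L)L^2 + (\det L)I = 0$ forces $(\det L)I = 0$, whence lemma \ref{lem:simple} identifies $L$ as simple. The main obstacle I anticipate is purely bookkeeping --- keeping signs straight in the trace expansion and in the computations of $Lu$ and $Lv$ --- since the conceptual heart of the matter is already captured by the inclusion $\mathrm{range}(L) \subseteq \mathrm{span}(u, v)$.
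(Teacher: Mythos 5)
Your proof is correct and follows essentially the same skeleton as the paper's: convert ${\rm tr}_2L$ to $-\tfrac{1}{2}{\rm tr}L^2$ via lemma \ref{lem:pow}, establish $L^3=-({\rm tr}_2L)L$ for simple $L$, and derive the converse from lemma \ref{lem:4} together with lemma \ref{lem:simple}. The one place you genuinely diverge is the forward direction: the paper asserts $L^3=\gamma L$ by an unexhibited direct matrix computation, whereas you note that $L$ maps all of ${\mathbb R}^4$ into $\mathrm{span}(u,v)$, write down the $2\times 2$ matrix of $L$ restricted to that plane in the basis $\{u,v\}$ (which is traceless with determinant $ab-c^2$), and apply Cayley--Hamilton on the plane. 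This is a cleaner, dimension-reducing argument that makes the tracelessness and the determinant $ab-c^2$ do conceptual work rather than being incidental outputs of bookkeeping. One small gap worth closing explicitly: the $2\times2$ Cayley--Hamilton step needs $\{u,v\}$ to be a basis of the plane, i.e.\ $u,v$ linearly independent; but if they are dependent then $L=0$ and the identity is trivial, so the patch is immediate. Your converse is also marginally slicker than the paper's: rather than taking traces and invoking the determinant formula in lemma \ref{lem:pow}, you multiply $L^3+({\rm tr}_2L)L=0$ by $L$ and subtract from lemma \ref{lem:4} to get $(\det L)I=0$ outright. Both routes are fine; yours shaves a step.
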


\begin{proof}
For $L=u\wedge^gv$, one computes directly from \eqref{eq:bivector} that ($\ast$) ${\rm tr}L^2=2\gamma$ and $L^3=\gamma L$, where $\gamma$ is the scalar $(u^Tgv)^2-(u^Tgu)(v^Tgv)$.  Lemma \ref{lem:pow} and ($\ast$) imply that ${\rm tr}_2L=-\gamma$.  For the converse, $L^3+({\rm tr}_2L)L=0$ implies that $L^4+({\rm tr}_2L)L^2=0$.  Lemma \ref{lem:pow} then yields $\det{L}=-\tfrac{1}{4}({\rm tr}L^4+{\rm tr}_2L\,{\rm tr}L^2)=0$, and lemma \ref{lem:simple} applies.
\end{proof}

The four--vectors $u,v$ are parallel if and only if $u\wedge^gv=0$.  However, even if $u,v$ are linearly independent (so that $u\wedge^gv\neq 0$), it is possible that the metric is no longer nondegenerate when restricted to the plane spanned by $u,v$.  We will call such a two--plane {\bf degenerate}.  The next lemma (combined with the previous lemma) shows that degenerate two--planes are synonymous with {\em null two--planes (two--flats)} in the sense of \cite{Synge}.

\begin{lemma}\label{lem:degen}
$u,v$ span a nondegenerate two--plane if and only if ${\rm tr}_2u\wedge^gv\neq 0$.
\end{lemma}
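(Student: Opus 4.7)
The plan is to recognize ${\rm tr}_2(u\wedge^g v)$ as the determinant of the Gram matrix of $g$ on ${\rm span}(u,v)$, and then invoke the standard linear-algebra fact that a symmetric bilinear form is nondegenerate on a subspace iff its Gram matrix in some (equivalently, any) basis is invertible.

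First I would apply the second statement of lemma \ref{lem:3}, which gives
\[
  {\rm tr}_2(u\wedge^g v) = (u^Tgu)(v^Tgv) - (u^Tgv)^2
  = \det\begin{pmatrix} u^Tgu & u^Tgv \\ u^Tgv & v^Tgv \end{pmatrix}.
\]
The right-hand side is exactly the Gramian of the pair $u,v$ with respect to $g$. If $u,v$ are linearly independent, then $\{u,v\}$ is a basis of the plane they span, and the Gramian above is the matrix of the restricted bilinear form in this basis; hence this restricted form is nondegenerate iff its determinant is nonzero.

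Finally I would dispatch the degenerate case: if $u,v$ are linearly dependent, then directly from \eqref{eq:bivector} we have $u\wedge^g v = 0$, so ${\rm tr}_2(u\wedge^g v) = 0$, while $u,v$ fail to span a two-plane at all, so the left-hand side also fails. Thus both implications hold.

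I do not anticipate any real obstacle here; the only mild care needed is distinguishing the linearly dependent case (where neither side holds, for different reasons) from the linearly independent case (where the Gramian argument applies). The real content of the lemma is simply the coincidence of ${\rm tr}_2(u\wedge^g v)$ with the $2\times 2$ Gram determinant, which is already packaged in lemma \ref{lem:3}.
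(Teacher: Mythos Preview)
Your proposal is correct and is essentially the same argument as the paper's: both identify ${\rm tr}_2(u\wedge^g v)$ with the Gram determinant $(u^Tgu)(v^Tgv)-(u^Tgv)^2$ via lemma~\ref{lem:3} and then observe that nondegeneracy of the restricted form is equivalent to this determinant being nonzero. The paper merely writes out the linear system for $w=\alpha u+\beta v$ explicitly rather than citing the Gram-matrix fact by name, and omits your separate treatment of the linearly dependent case.
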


\begin{proof}
For $w=\alpha u+\beta v$, we have $w^Tgu=\alpha(u^Tgu)+\beta(v^Tgu)$ and $w^Tgv=\alpha(u^Tgv)+\beta(v^Tgv)$.  This linear system has a unique solution if and only if its determinant $(u^Tgu)(v^Tgv)-(u^Tgv)^2={\rm tr}_2L$ is nonzero.
\end{proof}

The following lemma implies that almost all simple Lorentz bivectors are determined up to constant multiple by their squares.

\begin{lemma}\label{lem:2plane}
Suppose $L=u\wedge^gv$ is such that ${\rm tr}_2L\neq 0$. Then $P_L\doteq-L^2/{\rm tr}_2L$ is orthogonal projection (with respect to $g$) onto the nondegenerate two--plane spanned by $u,v$; that is, $P_L^2=P_L$, ${\rm tr}P_L=2$, and $(gP_L)^T=gP_L$.  Conversely, if $P$ is orthogonal projection onto a two--plane, the two--plane is necessarily nondegenerate, and we have $P=P_L$, where $L=u\wedge^gv$ and $u,v$ are any linearly independent four--vectors in the image of $P$.
\end{lemma}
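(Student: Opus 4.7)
The plan is to verify the three algebraic identities $P_L^2 = P_L$, ${\rm tr}\, P_L = 2$, and $(g P_L)^T = g P_L$ in turn, then identify the image of $P_L$ by direct calculation, and finally dispatch the converse by showing that an orthogonal projection is determined by its image.

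Each of the three identities follows nearly mechanically from the prior lemmas. For idempotency, Lemma \ref{lem:3} gives $L^3 = -({\rm tr}_2 L) L$, so multiplying by $L$ yields $L^4 = -({\rm tr}_2 L) L^2$, and dividing through by $({\rm tr}_2 L)^2$ gives $P_L^2 = P_L$. For the trace, Lemma \ref{lem:pow} gives ${\rm tr}\, L^2 = -2\,{\rm tr}_2 L$, so ${\rm tr}\, P_L = 2$. For self-adjointness, $L^T = -g L g^{-1}$ for $L \in so(g)$ squares to $(L^2)^T = g L^2 g^{-1}$, which rearranges to $(g L^2)^T = g L^2$ and hence to $(g P_L)^T = g P_L$.

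For the image of $P_L$, I would expand $L^2 = (u v^T g - v u^T g)^2$ directly; each of the four resulting terms has the form $(\text{vector})(\text{row})$ with vector $u$ or $v$, so $\text{image}(L^2) \subseteq \text{span}(u,v)$. Applying $L^2$ separately to $u$ and to $v$, using the scalars $A = u^T g v$, $B = u^T g u$, $C = v^T g v$, produces $L^2 u = (A^2 - BC) u$ and $L^2 v = (A^2 - BC) v$. Lemma \ref{lem:3} identifies $A^2 - BC = -{\rm tr}_2 L$, so $P_L u = u$ and $P_L v = v$, and the image is exactly $\text{span}(u,v)$. This short expansion is the only nontrivial arithmetic in the proof and is the main step.

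For the converse, suppose $P$ is orthogonal projection onto a two-plane $V$. If $x \in V \cap V^\perp$, then $Px = x$ combined with $(gP)^T = gP$ gives $g x = g P x = P^T g x$, while $x \in V^\perp = \text{image}(P)^\perp$ gives $(Pz)^T g x = 0$ for every $z$, i.e., $P^T g x = 0$; hence $g x = 0$, so $x = 0$ and $V$ is nondegenerate. For any linearly independent $u, v \in V$, Lemma \ref{lem:degen} yields ${\rm tr}_2 L \neq 0$, so $P_L$ is well-defined and, by the first part, is itself an orthogonal projection onto $V$; the same argument applied to each of $P$ and $P_L$ shows $\ker P = V^\perp = \ker P_L$, so $P = P_L$.
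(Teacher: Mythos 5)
Your proof is correct and follows essentially the same route as the paper: the three algebraic identities are pulled from Lemmas \ref{lem:3} and \ref{lem:pow} exactly as the paper indicates, and the nondegeneracy argument for the converse is the paper's argument in slightly different words. The only substantive variation is that you verify $\mathrm{image}(P_L)=\mathrm{span}(u,v)$ by an explicit expansion (which the paper leaves implicit behind its citation of three lemmas), and you conclude uniqueness via equality of kernels and images rather than the paper's Riesz-style characterization of $Px$ as the unique element of $\mathcal{P}$ pairing with $\mathcal{P}$ like $x$ does; both are standard and equivalent.
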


\begin{proof}
The first statement follows from lemmas \ref{lem:3}, \ref{lem:pow}, and \ref{lem:degen}.  For the second statement, let ${\mathcal P}$ denote the two--plane forming the image of $P$.  Suppose that $w\in{\mathcal P}$ is such that $w^Tgy=0$ for all $y\in{\mathcal P}$.  Thus for any four--vector $x$, $w^Tgx=(Pw)^Tgx=w^TgPx=0$; hence $w=0$, since $g$ is nondegenerate.  Thus ${\mathcal P}$ is nondegenerate.  The remainder of the lemma follows from the uniqueness of orthogonal projection.  Indeed, given a four--vector $x$, for any $y\in{\mathcal P}$ we have $(Px)^Tgy=x^TgPy=x^Tgy$.  Since $g$ is nondegenerate on $\mathcal P$, the value of $Px$ is thus uniquely determined by $x$ and $g$.
\end{proof}

\subsection{Orthogonal sum of simple bivectors}

Although an element $L\in so(g)$ will not be a simple bivector in general, we may write it as a sum of simple bivectors.  Such a sum is not unique; however, we will do so in a canonical way using the operators
\begin{equation}\label{eq:proj}
  P_\pm\doteq\pm\frac{L^2-\mu_\mp I}{\mu_+-\mu_-}
\end{equation}
Here $\mu_\pm$ are defined as in equation \eqref{eq:eigenL}.  For $P_\pm$ to be well--defined, $L$ cannot be a simple bivector; that is, $\det{L}\neq 0$, since by the comments following equation \eqref{eq:eigenL}, this guarantees that $\mu_+-\mu_-\neq 0$.

\begin{theorem}\label{thm:decomp}
If $L\in so(g)$ is such that $\det{L}\neq 0$, then (i) $P_++P_-=I$, (ii) $P_+P_-=0=P_-P_+$, (iii) $P_\pm^2=P_\pm$, and (iv) $P_\pm L=LP_\pm$.  Moreover, $P_\pm L$ are simple Lorentz bivectors with $L=P_+L+P_-L$, $(P_+L)(P_-L)=0=(P_-L)(P_+L)$, and ${\rm tr}_2P_\pm L=-\mu_\pm$.
\end{theorem}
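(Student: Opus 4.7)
The key observation is that the Cayley--Hamilton identity (Lemma \ref{lem:4}), rewritten in terms of the squared--roots, factors as $(L^2-\mu_+I)(L^2-\mu_-I)=0$, so $P_+,P_-$ are nothing but the standard spectral projections of $L^2$ onto its two (real, distinct since $\det L\neq 0$) eigenspaces. Properties (i)--(iv) then fall out almost mechanically: (i) is a direct linear combination of the definitions; (iv) holds because each $P_\pm$ is a polynomial in $L$; (ii) is the Cayley--Hamilton factorisation just mentioned; and (iii) is the identity $P_\pm^2=P_\pm(I-P_\mp)=P_\pm-P_\pm P_\mp$. The decomposition $L=P_+L+P_-L$ is immediate from (i), and the mutual annihilation $(P_\pm L)(P_\mp L)=P_\pm P_\mp L^2=0$ follows from (iv) and (ii).

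Next I would verify that the summands $P_\pm L$ actually lie in $so(g)$. Since $L\in so(g)$ means $L^T=-gLg^{-1}$, squaring gives $(L^2)^T=gL^2g^{-1}$, from which $P_\pm^T=gP_\pm g^{-1}$. Then
\begin{equation*}
  (P_\pm L)^Tg+g(P_\pm L)=L^TgP_\pm+gP_\pm L=-gLP_\pm+gP_\pm L=-g[L,P_\pm]=0
\end{equation*}
by (iv), so $P_\pm L\in so(g)$.

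The core computational step, which feeds the remaining claims, is the eigen--identity
\begin{equation*}
  P_\pm L^2=\mu_\pm P_\pm.
\end{equation*}
To obtain it, multiply the defining formula \eqref{eq:proj} by $L^2$ and substitute $L^4=(\mu_++\mu_-)L^2-\mu_+\mu_-I$ from Lemma \ref{lem:4}; everything collapses to $\mu_\pm P_\pm$. Because $P_\pm$ is a polynomial in $L$, this gives $(P_\pm L)^2=P_\pm^2L^2=P_\pm L^2=\mu_\pm P_\pm$ and hence $(P_\pm L)^4=\mu_\pm^2 P_\pm$.

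It remains to evaluate traces. Using Lemma \ref{lem:pow}, $\mathrm{tr}\,L^2=-2\,\mathrm{tr}_2L=2(\mu_++\mu_-)$, and plugging this into \eqref{eq:proj} yields $\mathrm{tr}\,P_\pm=2$. Hence $\mathrm{tr}(P_\pm L)^2=2\mu_\pm$ and $\mathrm{tr}(P_\pm L)^4=2\mu_\pm^2$, so Lemma \ref{lem:pow} applied to $P_\pm L\in so(g)$ gives
\begin{equation*}
  \mathrm{tr}_2(P_\pm L)=-\tfrac12\mathrm{tr}(P_\pm L)^2=-\mu_\pm,\qquad
  \det(P_\pm L)=\tfrac18(4\mu_\pm^2-4\mu_\pm^2)=0,
\end{equation*}
and simplicity of $P_\pm L$ follows from Lemma \ref{lem:simple}. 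The main obstacle is really just spotting the eigen--identity $P_\pm L^2=\mu_\pm P_\pm$ and the trace calculation $\mathrm{tr}\,P_\pm=2$; once those are in hand, every remaining statement reduces to a one--line application of Lemmas \ref{lem:pow} and \ref{lem:simple}.
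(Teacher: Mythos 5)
Your proposal is correct and rests on the same architecture as the paper's proof: treat $P_\pm$ as the spectral projections of $L^2$ coming from the Cayley--Hamilton factorisation $(L^2-\mu_+I)(L^2-\mu_-I)=0$, read off (i)--(iv) mechanically, show $P_\pm L\in so(g)$, and exploit an eigen-identity to handle simplicity and the second-order trace. Where you diverge is only in the final verification. The paper derives the cube identity $(P_\pm L)^3=\mu_\pm(P_\pm L)$ directly, using $L^5=-({\rm tr}_2L)L^3-(\det L)L$, and invokes Lemma \ref{lem:3}, which yields simplicity of $P_\pm L$ and ${\rm tr}_2(P_\pm L)=-\mu_\pm$ in a single stroke. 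You instead work one degree lower, with $(P_\pm L)^2=\mu_\pm P_\pm$, then compute ${\rm tr}\,P_\pm=2$ and feed the resulting traces of $(P_\pm L)^2$ and $(P_\pm L)^4$ into Lemma \ref{lem:pow} to obtain ${\rm tr}_2(P_\pm L)=-\mu_\pm$ and $\det(P_\pm L)=0$ separately, closing with Lemma \ref{lem:simple}. Both routes work; the paper's is marginally more economical since Lemma \ref{lem:3} packages both conclusions at once, while yours makes explicit the fact that ${\rm tr}\,P_\pm=2$, which is independently worth knowing (it is the rank statement that ties $P_\pm$ to Lemma \ref{lem:2plane}). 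Your check that $P_\pm L\in so(g)$ via the commutator $[L,P_\pm]=0$ is also a slightly different (and clean) route than the paper's direct transposition of $L^3$; either is fine.
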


\begin{proof}
Set $N\doteq\mu_+-\mu_-$.  Then $L^4+({\rm tr}_2L)L^2+(\det{L})I=(L^2-\mu_+I)(L^2-\mu_-I)=-N^2P_+P_-$.  Thus $P_+P_-=0$ by lemma \ref{lem:4}.  Properties (i), (iii), (iv) also follow from the definitions of $P_\pm$, coupled with lemma \ref{lem:4} and the remarks after equation \eqref{eq:eigenL}.  The remaining two algebraic statements follow from (i), (ii), and (iv), so it suffices to establish that $P_\pm L$ are simple Lorentz bivectors with the second order trace as stated.  First, note that $(L^3)^T=(L^T)^3=(-gLg^{-1})^3=-gL^3g^{-1}$.  It then follows from equation \eqref{eq:proj} that $(P_\pm L)^T=-g(P_\pm L)g^{-1}$; hence $P_\pm L\in so(g)$.  Second, $(P_\pm L)^3=P_\pm L^3=\pm(L^5-\mu_\mp L^3)/N$.  Using lemma \ref{lem:4}, we have $L^5=-({\rm tr}_2L)L^3-(\det{L})L$.  Therefore, $(P_\pm L)^3=\pm\mu_\pm(L^3-\mu_\mp L)/N=\mu_\pm P_\pm L$.  From lemma \ref{lem:3}, $P_\pm L$ is a simple bivector with ${\rm tr}_2P_\pm L=-\mu_\pm$.
\end{proof}

\begin{corollary}\label{cor:bivector}
Any nonsimple Lorentz bivector $L$ may be written as the sum of two simple Lorentz bivectors: $L=L_++L_-$ with $L_+L_-=0=L_-L_+$, where
$$L_\pm=\pm\frac{L^3-\mu_\mp L}{\mu_+-\mu_-}$$
and ${\rm tr}_2L_\pm=-\mu_\pm$, with $\mu_\pm$ are defined as in equation \eqref{eq:eigenL}.\qed
\end{corollary}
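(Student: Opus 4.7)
The plan is to observe that the corollary is just the theorem rewritten at the level of bivectors rather than projections, so the work has essentially been done already. I would simply set $L_\pm \doteq P_\pm L$, where $P_\pm$ are the projection operators from equation \eqref{eq:proj}, and then compute directly:
$$L_\pm = P_\pm L = \pm\frac{(L^2-\mu_\mp I)L}{\mu_+-\mu_-} = \pm\frac{L^3-\mu_\mp L}{\mu_+-\mu_-},$$
which matches the formula in the corollary.

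Next, I would appeal to Theorem \ref{thm:decomp}. That theorem already asserts, verbatim, that $P_\pm L$ are simple Lorentz bivectors, that $L = P_+L + P_-L$, that $(P_+L)(P_-L) = 0 = (P_-L)(P_+L)$, and that ${\rm tr}_2 P_\pm L = -\mu_\pm$. Translating these four facts through the identification $L_\pm = P_\pm L$ yields exactly the four claims of the corollary. The hypothesis $\det L \ne 0$ needed for $P_\pm$ to be well--defined (via $\mu_+-\mu_-\ne 0$, as noted after equation \eqref{eq:eigenL}) is exactly the hypothesis that $L$ is nonsimple, by lemma \ref{lem:simple}.

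There is no real obstacle here: the entire content of the corollary is already contained in theorem \ref{thm:decomp}, and the only thing to check is that the explicit formula for $L_\pm$ agrees with $P_\pm L$. Accordingly, my proof would be a two--line derivation plus a citation of the theorem, which is presumably why the author marked the corollary with \texttt{\textbackslash qed} directly after its statement rather than providing a separate proof block.
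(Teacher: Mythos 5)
Your proof is correct and is exactly the argument the paper intends: the corollary has no separate proof block because it is the immediate translation of Theorem~\ref{thm:decomp} via $L_\pm = P_\pm L$, together with the observation that nonsimplicity is equivalent to $\det L \neq 0$ (lemma~\ref{lem:simple}), which makes $P_\pm$ well--defined. Nothing is missing.
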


We shall refer to the decomposition $L=L_++L_-$ in the corollary as the {\bf orthogonal decomposition} of $L$.  We remark that since $\mu_+>0$ and $\mu_-<0$ for a nonsimple Lorentz bivector, lemma \ref{lem:3} implies that the two--plane associated to $L_+$ (as in lemma \ref{lem:2plane}) is {\em time--like} (intersects the null--cone) while that of $L_-$ is {\em space--like} (does not intersect the null--cone) in Synge's classification of two--planes given in \cite{Synge}.

\subsection{Special case: Minkowski metric}

In the case when $g=\eta\doteq{\rm diag}(-1,1,1,1)$ is the Minkowski metric, we may make use of the explicit parametrization of antisymmetric matrices given in equation \eqref{eq:skew}.  That is, any element of $so(\eta)$ can be written in the form
\begin{equation}\label{eq:minkowski}
  L_{\bf xy}
  =\begin{pmatrix}
     0       & {\bf x}^T\\
     {\bf x} & W_{\bf y}
   \end{pmatrix}
\end{equation}
for some ${\bf x}, {\bf y}\in{\mathbb R}^3$; i.e., $L_{\bf xy}=A_{\bf xy}\eta$.  Note that $\det{L_{\bf xy}}=-({\bf x}\cdot{\bf y})^2$, so that $L_{\bf xy}$ is simple if and only if ${\bf x}\cdot{\bf y}=0$.  One computes, using lemma \ref{lem:pow}, that ${\rm tr}_2L_{\bf xy}=|{\bf y}|^2-|{\bf x}|^2$.

\begin{theorem}
If ${\bf x}\cdot{\bf y}\neq 0$, then $L_{\bf xy}=L_{{\bf a}_+{\bf b}_+}+L_{{\bf a}_-{\bf b}_-}$ is the orthogonal decomposition of $L_{\bf xy}$: $L_{{\bf a}_+{\bf b}_+}L_{{\bf a}_-{\bf b}_-}=0=L_{{\bf a}_-{\bf b}_-}L_{{\bf a}_+{\bf b}_+}$, where
$${\bf a}_\pm=\pm\frac{\mu_\pm{\bf x}+({\bf x}\cdot{\bf y}){\bf y}}
                      {\mu_+-\mu_-}
  \quad\text{and}\quad
  {\bf b}_\pm=\pm\frac{\mu_\pm{\bf y}-({\bf x}\cdot{\bf y}){\bf x}}
                      {\mu_+-\mu_-}
$$
and $\mu_\pm=\tfrac{1}{2}\bigl(|{\bf x}|^2-|{\bf y}|^2\pm\sqrt{(|{\bf x}|^2-|{\bf y}|^2)^2+4({\bf x}\cdot{\bf y})^2}\bigr)$.
\end{theorem}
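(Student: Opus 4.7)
The plan is to apply Corollary \ref{cor:bivector} directly: since $\mathbf{x}\cdot\mathbf{y}\ne 0$ means $\det L_{\mathbf{xy}}=-(\mathbf{x}\cdot\mathbf{y})^2\ne 0$, the bivector is nonsimple, and the orthogonal decomposition is given by $L_\pm=\pm(L^3-\mu_\mp L)/(\mu_+-\mu_-)$. The task therefore reduces to (a) rewriting the formula \eqref{eq:eigenL} for $\mu_\pm$ in terms of $\mathbf{x},\mathbf{y}$, and (b) computing $L_{\mathbf{xy}}^3$ explicitly in the block form \eqref{eq:minkowski} so that one can read off $\mathbf{a}_\pm$ and $\mathbf{b}_\pm$.

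For (a), I would substitute the already-noted identities $-{\rm tr}_2 L_{\mathbf{xy}}=|\mathbf{x}|^2-|\mathbf{y}|^2$ and $\det L_{\mathbf{xy}}=-(\mathbf{x}\cdot\mathbf{y})^2$ into \eqref{eq:eigenL}; the stated expression for $\mu_\pm$ drops out immediately. I would also record the useful consequence $\mu_++\mu_-=|\mathbf{x}|^2-|\mathbf{y}|^2$, so that $(|\mathbf{x}|^2-|\mathbf{y}|^2)-\mu_\mp=\mu_\pm$ — this is the algebraic identity that makes the final formulas collapse to their compact form.

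For (b), I would compute $L_{\mathbf{xy}}^2$ in block form using $W_\mathbf{y}\mathbf{v}=\mathbf{y}\times\mathbf{v}$ and the standard identity $W_\mathbf{y}^2=\mathbf{y}\mathbf{y}^T-|\mathbf{y}|^2 I$, then multiply by $L_{\mathbf{xy}}$ on the left. The top row of $L_{\mathbf{xy}}^3$ works out (after using $\mathbf{x}^T W_\mathbf{y}^2=(\mathbf{x}\cdot\mathbf{y})\mathbf{y}^T-|\mathbf{y}|^2\mathbf{x}^T$) to be of the form $(|\mathbf{x}|^2-|\mathbf{y}|^2)\mathbf{x}^T+(\mathbf{x}\cdot\mathbf{y})\mathbf{y}^T$. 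For the lower-right $3\times 3$ block, the key observation is that $W_\mathbf{y}^3=-|\mathbf{y}|^2W_\mathbf{y}$ (because $W_\mathbf{y}\mathbf{y}=0$), while the remaining piece has the form $\mathbf{x}(\mathbf{x}\times\mathbf{y})^T-(\mathbf{x}\times\mathbf{y})\mathbf{x}^T$, which by the identity $\mathbf{u}\mathbf{v}^T-\mathbf{v}\mathbf{u}^T=W_{\mathbf{v}\times\mathbf{u}}$ combined with BAC–CAB equals $W_{-(\mathbf{x}\cdot\mathbf{y})\mathbf{x}+|\mathbf{x}|^2\mathbf{y}}$. Thus $L_{\mathbf{xy}}^3=L_{\mathbf{x}'\mathbf{y}'}$ where $\mathbf{x}'=(|\mathbf{x}|^2-|\mathbf{y}|^2)\mathbf{x}+(\mathbf{x}\cdot\mathbf{y})\mathbf{y}$ and $\mathbf{y}'=(|\mathbf{x}|^2-|\mathbf{y}|^2)\mathbf{y}-(\mathbf{x}\cdot\mathbf{y})\mathbf{x}$.

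Forming $L^3-\mu_\mp L$ and using $\mu_++\mu_--\mu_\mp=\mu_\pm$ then yields top row $\mu_\pm\mathbf{x}^T+(\mathbf{x}\cdot\mathbf{y})\mathbf{y}^T$ and lower-right block $W_{\mu_\pm\mathbf{y}-(\mathbf{x}\cdot\mathbf{y})\mathbf{x}}$, from which the claimed formulas for $\mathbf{a}_\pm,\mathbf{b}_\pm$ follow upon dividing by $\mu_+-\mu_-$. The main obstacle is not conceptual but bookkeeping: keeping the two $3\times 3$ blocks straight and recognizing the combination $\mathbf{x}(\mathbf{x}\times\mathbf{y})^T-(\mathbf{x}\times\mathbf{y})\mathbf{x}^T$ as a single $W$-matrix via the BAC–CAB identity. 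Once those pieces are in place, the theorem reduces to a one-line invocation of Corollary \ref{cor:bivector}.
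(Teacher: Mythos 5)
Your proposal is correct and follows exactly the paper's route: express $\mu_\pm$ via the Minkowski-specific formulas for ${\rm tr}_2L_{\bf xy}$ and $\det L_{\bf xy}$, compute $L_{\bf xy}^3=L_{{\bf x}'{\bf y}'}$ in block form, and invoke Corollary~\ref{cor:bivector}. The paper's own proof is just a terse version of this same argument, stating the result of the cube computation and leaving the block-matrix bookkeeping (the $W_{\bf y}^2$, $W_{\bf y}^3$, and $W_{{\bf v}\times{\bf u}}$ identities you spell out) to the reader.
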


\begin{proof}
Using equation \eqref{eq:minkowski}, compute $L_{\bf xy}^3=L_{{\bf x}'{\bf y}'}$, where ${\bf x}'=(|{\bf x}|^2-|{\bf y}|^2){\bf x}+({\bf x}\cdot{\bf y}){\bf y}$, and ${\bf y}'=-({\bf x}\cdot{\bf y}){\bf x}+(|{\bf x}|^2-|{\bf y}|^2){\bf y}$.  Now use corollary \ref{cor:bivector}.
\end{proof}

One may show that ${\bf a}_+\cdot{\bf a}_-=0$, ${\bf b}_+\cdot{\bf b}_-=0$, and ${\bf a}_\pm\times{\bf b}_\mp=0$.  Necessarily, ${\bf a}_\pm\cdot{\bf b}_\pm=0$, since $L_{{\bf a}_+{\bf b}_+}$ and $L_{{\bf a}_-{\bf b}_-}$ are simple.

\subsubsection{Relation with the Hodge dual}

We may also interpret orthogonal decomposition in terms of Hodge duals.  Write the Lorentz bivector $L$ as the two--form $L=\tfrac{1}{2}L_\beta^\alpha e^\beta\wedge e_\alpha$, where $e_\alpha$ ($\alpha=0,1,2,3$) is a basis of ${\mathbb R}^4$ and $e^\beta=\eta^{\alpha\beta}e_\alpha$ is the dual basis (the summation convention is used).  Observe that we may do this as $L\eta^{-1}=L\eta$ is antisymmetric.  The Hodge dual (with respect to $\eta$) of $A_{\mathbf x\mathbf y}$ is $A_{(-{\mathbf y}){\mathbf x}}$, so that $\ast L_{{\mathbf x}{\mathbf y}}=L_{(-{\mathbf y}){\mathbf x}}$.  Moreover, one computes that $({\mathbf x}\cdot{\mathbf y}){\mathbf a}_\pm=\mu_\pm{\mathbf b}_\mp$ and $({\mathbf x}\cdot{\mathbf y}){\mathbf b}_\pm=-\mu_\pm{\mathbf a}_\mp$.  It then follows that
$$*L_{{\mathbf a}_+{\mathbf b}_+}
  =\frac{\mu_+}{{\mathbf x}\cdot{\mathbf y}}\,
     L_{{\mathbf a}_-{\mathbf b}_-}
  \quad\text{and}\quad
  *L_{{\mathbf a}_-{\mathbf b}_-}
  =-\frac{{\mathbf x}\cdot{\mathbf y}}{\mu_+}\,
     L_{{\mathbf a}_+{\mathbf b}_+}.
$$

\subsubsection{Relation with the spin group}\label{sssec:spin}

In addition, we describe the orthogonal decomposition in terms of the double covering group homomorphism $\Psi:{\it SL_2}{\mathbb C}\rightarrow{\it SO^+}(\eta)$.  Since we are using the mathematicians choice of metric signature, we use $\rho(u)\doteq\bigl(\begin{smallmatrix}u_1+iu_2 & u_0+u_3\\ u_0-u_3 & u_1-iu_2\end{smallmatrix}\bigr)$ as our linear embedding of ${\mathbb R}^4$ into the algebra of $2\times 2$ complex matrices, along with the involution $\bigl(\begin{smallmatrix}A & B\\ C & D\end{smallmatrix}\bigr)^\star\doteq\bigl(\begin{smallmatrix}\bar{D} & \bar{B}\\ \bar{C} & \bar{A}\end{smallmatrix}\bigr)$, so that $\rho(u)^\star=\rho(u)$ and $\det\rho(u)=u^T\eta u$.  The map $\Psi$ sends the $2\times 2$ complex matrix $M$ with unit determinant to the Lorentz transformation $u\mapsto\rho^{-1}(M\rho(u)M^\star)$.

On the Lie algebra level, the map $\psi:{\it sl_2}{\mathbb C}\rightarrow{\it so}(\eta)$ corresponding to $\Psi$ sends the traceless complex matrix $m=\bigl(\begin{smallmatrix} a & b\\ c & -a\end{smallmatrix}\bigr)$ to the Lorentz bivector $u\mapsto\rho^{-1}(m\rho(u)+\rho(u)m^\star)$.  One computes that $\psi(m)=L_{{\mathbf x}{\mathbf y}}$, where ${\mathbf x}=\bigl({\it Re}(b+c),{\it Im}(b-c),2{\it Re}(a)\bigr)$ and ${\mathbf y}=\bigl({\it Im}(b+c),-{\it Re}(b-c),2{\it Im}(a)\bigr)$.  Thus ${\mathbf x}\cdot{\mathbf y}=2{\it Im}(a^2+bc)$ and $|{\mathbf y}|^2-|{\mathbf x}|^2=-4{\it Re}(a^2+bc)$.  In particular, $\psi(m)$ is simple if and only if $a^2+bc$ is real.  While it is possible to work out $\psi^{-1}(L_{{\mathbf a}_+{\mathbf b}_+})$, the resulting expression is not particularly nice.  However, one computes that $\psi(im)=\ast L_{{\mathbf x}{\mathbf y}}$, so that by the previous discussion, if $\psi(m_+)=L_{{\mathbf a}_+{\mathbf b}_+}$, then $\psi(i\theta m_+)=L_{{\mathbf a}_-{\mathbf b}_-}$ for some real $\theta$.

To get back to the Lie group level, we only need to exponentiate: if $\psi(m)=L$, then $\Psi(\exp(m))=\exp(L)$.  We will return to this briefly at the end of section \ref{ssec:factordecomp}.

\section{Exponential on $so(g)$}

Recall that the exponential of a square matrix $M$ is defined as $\exp(M)=\sum_{n=0}^\infty\frac{1}{n!}M^n$.  The series always converges, and the exponential of a Lorentz bivector lies in the identity component the Lie group of all Lorentz transformations.

\begin{theorem}\label{thm:expsimple}
If $L$ is a simple Lorentz bivector, then $\exp(L)=I+d_1L+d_2L^2$, where
\begin{align*}
  &d_1=\frac{\sinh\sqrt{-{\rm tr}_2L}}{\sqrt{-{\rm tr}_2L}}
  &\text{and} &
  &d_2=\frac{1-\cosh\sqrt{-{\rm tr}_2L}}{{\rm tr}_2L} &
  &\text{if ${\rm tr}_2L<0$,}\\
  &d_1=\frac{\sin\sqrt{{\rm tr}_2L}}{\sqrt{{\rm tr}_2L}}
  &\text{and} &
  &d_2=\frac{1-\cos\sqrt{{\rm tr}_2L}}{{\rm tr}_2L} &
  &\text{if ${\rm tr}_2L>0$,}
\end{align*}
and if ${\rm tr}_2L=0$, then $d_1=1$ and $d_2=\tfrac{1}{2}$.
\end{theorem}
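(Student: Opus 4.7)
The plan is to exploit the cubic recursion from Lemma \ref{lem:3}: for simple $L$, one has $L^3 = -({\rm tr}_2 L)\, L$. Writing $t \doteq {\rm tr}_2 L$, this immediately gives $L^{2k+1} = (-t)^k L$ for all $k\geq 0$ and $L^{2k} = (-t)^{k-1} L^2$ for all $k\geq 1$. Thus every term of the power series $\exp(L)=\sum_{n\geq 0} L^n/n!$ collapses into a scalar multiple of one of $I$, $L$, or $L^2$, and no higher powers are needed.

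Second, I would split the exponential series by parity of the index and factor out $L$ and $L^2$:
\begin{equation*}
  \exp(L) = I + L\sum_{k=0}^{\infty}\frac{(-t)^k}{(2k+1)!} + L^2\sum_{k=1}^{\infty}\frac{(-t)^{k-1}}{(2k)!}.
\end{equation*}
This isolates $d_1$ and $d_2$ as scalar series in the single variable $-t$.

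Third, I would recognize these series as standard. If $t<0$, set $s=\sqrt{-t}$; then $-t=s^2$ and the two sums equal $(\sinh s)/s$ and $(\cosh s - 1)/s^2 = (1-\cosh s)/t$ respectively, which are the stated formulas. If $t>0$, set $s=\sqrt{t}$; then $-t=-s^2$ and the sums equal $(\sin s)/s$ and $(1-\cos s)/s^2 = (1-\cos s)/t$, again matching. The case $t=0$ is trivial because $L^3=0$ forces the exponential series to terminate at $\tfrac{1}{2}L^2$, yielding $d_1=1$ and $d_2=\tfrac{1}{2}$ directly; alternatively, both trigonometric expressions have removable singularities at $0$ with these limits, so the statement is consistent.

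There is essentially no obstacle beyond bookkeeping: the Cayley--Hamilton-type identity $L^3=-tL$ does all the work, and what remains is recognizing the two classical Taylor series. The only care required is in matching signs in the denominator of $d_2$, where the identity $(1-\cosh s)/s^2 = (1-\cosh s)/(-t)$ for $t<0$ (i.e. $s^2 = -t$) must be written as $(1-\cosh\sqrt{-t})/t$ to agree with the stated formula; the corresponding sign is already correct in the $t>0$ case since then $s^2=t$.
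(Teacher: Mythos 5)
Your proof is correct and uses essentially the same idea as the paper: apply the cubic relation $L^3=-({\rm tr}_2L)L$ from Lemma~\ref{lem:3} to collapse all powers of $L$ onto $I$, $L$, $L^2$, then recognize the resulting even/odd subseries as $\sinh$/$\cosh$ (or $\sin$/$\cos$) series. The paper's proof is terser but identical in substance; your careful sign bookkeeping in the $d_2$ term is a useful explicit step that the paper leaves implicit.
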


\begin{proof}
Lemma \ref{lem:3} implies that $L^{2k+1}=(-1)^k({\rm tr}_2L)^kL$ for all $k\neq 0$, which is used to sum the exponential series.
\end{proof}

This formula appears in \cite{Geyer} with $-{\rm tr}_2L$ replaced by the symbol $\alpha$; however in that reference, $\alpha$ is not identified with a matrix invariant, the coefficient of $L^2$ is incorrect, and the limiting case when ${\rm tr}_2L=0$ is not handled explicitly.

For nonsimple Lorentz bivectors, one may sum the exponential series using the identity $L^k=\mu_+^kL_++\mu_-^kL_-$ for all $k\geq 0$, which one verifies by induction in conjunction with lemma \ref{lem:4}.  The resulting formula, which is cubic in $L$ appears in \cite{Geyer} and in \cite{Coll}, although the formula in the latter takes a different form than that in the former; both references use somewhat different methods than ours, and in particular, make use of algebra over the complex numbers.

Alternatively, we may make use of corollary \ref{cor:bivector} to obtain the exponential of a nonsimple Lorentz algebra element.  Since $L_+,L_-$ trivially commute, we have $\exp(L)=\exp(L_+)\exp(L_-)$, and each factor can be computed from theorem \ref{thm:expsimple} to obtain the following.

\begin{theorem}\label{thm:exp}
If $L\in so(g)$ is nonsimple and $L=L_++L_-$ is the orthogonal decomposition of $L$ as in corollary \ref{cor:bivector}, then
$$\exp(L)=I+d_1^+L_++d_1^-L_-+d_2^+L_+^2+d_2^-L_-^2$$
\begin{align*}
  d_1^+ &=\sinh\sqrt{\mu_+}/\sqrt{\mu_+}
   & d_1^- &=\sin\sqrt{-\mu_-}/\sqrt{-\mu_-}\\
  d_2^+ &=(\cosh\sqrt{\mu_+}-1)/\mu_+
   & d_2^- &=(\cos\sqrt{-\mu_-}-1)/\mu_-\qed
\end{align*}
\end{theorem}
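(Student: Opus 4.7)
The plan is to reduce to the already-established simple case by exploiting the orthogonal decomposition. First I would invoke Corollary \ref{cor:bivector} to write $L = L_+ + L_-$ with $L_+L_- = L_-L_+ = 0$. Mutual annihilation trivially implies commutativity, so the standard identity $\exp(A+B) = \exp(A)\exp(B)$ for commuting matrices yields $\exp(L) = \exp(L_+)\exp(L_-)$.

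Next, since $L_+$ and $L_-$ are simple Lorentz bivectors, I would apply Theorem \ref{thm:expsimple} to each factor. The key input from Corollary \ref{cor:bivector} is that ${\rm tr}_2 L_\pm = -\mu_\pm$. Because $L$ is nonsimple, the remarks following equation \eqref{eq:eigenL} give $\mu_+ > 0$ and $\mu_- < 0$, so ${\rm tr}_2 L_+ = -\mu_+ < 0$ and ${\rm tr}_2 L_- = -\mu_- > 0$. Thus $L_+$ falls into the first case of Theorem \ref{thm:expsimple} (with $\sqrt{-{\rm tr}_2 L_+} = \sqrt{\mu_+}$), producing the hyperbolic coefficients $d_1^+ = \sinh\sqrt{\mu_+}/\sqrt{\mu_+}$ and $d_2^+ = (\cosh\sqrt{\mu_+} - 1)/\mu_+$, while $L_-$ falls into the second case (with $\sqrt{{\rm tr}_2 L_-} = \sqrt{-\mu_-}$), producing the trigonometric coefficients $d_1^-$ and $d_2^-$ as stated.

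Finally I would expand the product
\[
\exp(L_+)\exp(L_-) = (I + d_1^+ L_+ + d_2^+ L_+^2)(I + d_1^- L_- + d_2^- L_-^2).
\]
All nine cross terms involving both $L_+$ and $L_-$ contain a factor $L_+ L_-$ (since $L_+^2 L_- = L_+(L_+L_-) = 0$, etc.) and therefore vanish by Corollary \ref{cor:bivector}. What survives is precisely $I + d_1^+ L_+ + d_1^- L_- + d_2^+ L_+^2 + d_2^- L_-^2$, which is the claimed formula.

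The only place requiring real care is matching the sign conventions from Theorem \ref{thm:expsimple} to the signs of $\mu_\pm$ so that the square roots remain real; once this is checked, the remainder is routine multiplication together with the annihilation identity $L_+L_- = 0$. There is no genuine obstacle beyond this bookkeeping.
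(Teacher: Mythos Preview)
Your proposal is correct and follows exactly the paper's own argument: use $L_+L_-=0$ to get $\exp(L)=\exp(L_+)\exp(L_-)$, apply Theorem~\ref{thm:expsimple} to each simple factor with ${\rm tr}_2L_\pm=-\mu_\pm$, and expand the product using mutual annihilation to kill the cross terms. The paper merely sketches this in one sentence; your write-up supplies the bookkeeping (sign cases for $\mu_\pm$, vanishing of cross terms) that the paper leaves implicit.
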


Note that if we write $L_\pm$ in terms of $L$ (according to corollary \ref{cor:bivector}), we obtain a polynomial of degree six in $L$.  However, the degree can be reduced to three by lemma \ref{lem:4}; the resulting expression is necessarily the same as that obtained in \cite{Coll} and \cite{Geyer}.

\section{Lorentz transformation factorization}

The {\bf Lorentz group} $O(g)$ is set of all linear transformations $\Lambda$ on ${\mathbb R}^4$ that preserve the metric: $\Lambda^Tg\Lambda=g$.  Necessarily $\det\Lambda=\pm 1$, and $\Lambda$ is said to be {\bf proper} if $\det\Lambda=1$.  It is well--known that the group of all proper Lorentz transformations has two connected components; the Lorentz transformations within the identity component are said to be {\bf orthochronous}.  In the case when $g=\eta$ is the Minkowski metric, the orthochronous transformations are characterized by the component $\Lambda_0^0$ being positive.  The set of all proper orthochronous Lorentz group elements will be denoted by $SO^+(g)$.

{\em A note on proofs.}  In the remainder, all proofs will use the following abbreviations without warning: $\tau_k\doteq{\rm tr}_k\Lambda$, $l_k\doteq{\rm tr}_kL$, $x\doteq\sqrt{-l_2}$ when $l_2<0$, $y\doteq\sqrt{l_2}$ when $l_2>0$, $c_+\doteq\cosh{x}$, $c_-\doteq\cos{y}$, $s_+\doteq\sinh{x}/x$, and $s_-\doteq\sin{y}/y$.  Note that since $\Lambda^{-1}=g^{-1}\Lambda^Tg$, we have ${\rm tr}_k\Lambda^{-1}=\tau_k={\rm tr}_k\Lambda$.  We remind the reader that ${\rm tr}_2\Lambda=\tfrac{1}{2}({\rm tr}^2\Lambda-{\rm tr}\Lambda^2)$, from equation \eqref{eq:2tr}.

\subsection{Some Lorentz transformation relations}

\begin{lemma}\label{lem:trace13}
For all $\Lambda\in SO^+(g)$, ${\rm tr}_3\Lambda={\rm tr}\Lambda$.  Consequently, $\Lambda$ satisfies the relation $\Lambda^4-({\rm tr}\Lambda)\Lambda^3+({\rm tr}_2\Lambda)\Lambda^2-({\rm tr}\Lambda)\Lambda+I=0$.
\end{lemma}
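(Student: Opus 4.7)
The plan is in two stages: first show that ${\rm tr}_3\Lambda = {\rm tr}\Lambda$, then deduce the polynomial identity by applying the Cayley--Hamilton theorem to the characteristic polynomial of $\Lambda$.

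For the first stage, I would complement the remark ${\rm tr}_k\Lambda^{-1} = {\rm tr}_k\Lambda$ (recorded in the note immediately preceding the lemma) with a general reciprocity identity for generalized traces: for any invertible $4\times 4$ matrix $M$,
$${\rm tr}_k M^{-1} = \frac{{\rm tr}_{4-k} M}{\det M}.$$
To derive this, I would write
$$\det(I + tM^{-1}) = \det(M^{-1})\,\det(M + tI) = \frac{t^4}{\det M}\det(I + M/t),$$
expand $\det(I + M/t)$ via \eqref{eq:ktr} and collect powers of $t$, and then compare with the expansion of $\det(I + tM^{-1})$ supplied by \eqref{eq:ktr} again. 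Specializing to $\Lambda\in SO^+(g)$, where $\det\Lambda = 1$, gives ${\rm tr}_k\Lambda^{-1} = {\rm tr}_{4-k}\Lambda$, and combining with the remark ${\rm tr}_k\Lambda^{-1} = {\rm tr}_k\Lambda$ produces the symmetry ${\rm tr}_k\Lambda = {\rm tr}_{4-k}\Lambda$ for every $k$. Setting $k = 3$ yields the first claim.

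For the second stage, the characteristic polynomial of any $4\times 4$ matrix reads (as noted in section \ref{sec:traces})
$$\det(\lambda I - \Lambda) = \lambda^4 - \tau_1\lambda^3 + \tau_2\lambda^2 - \tau_3\lambda + \det\Lambda.$$
Substituting $\tau_3 = \tau_1$ and $\det\Lambda = 1$, and then invoking Cayley--Hamilton, immediately produces the stated fourth--degree polynomial relation for $\Lambda$.

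The only nonroutine step is the reciprocity identity ${\rm tr}_k M^{-1} = {\rm tr}_{4-k}M / \det M$, which is not recorded earlier in the paper but follows purely from elementary determinantal manipulations. Once that is in hand, the rest is bookkeeping from identities already developed in section \ref{sec:traces}.
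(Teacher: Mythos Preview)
Your proof is correct. The paper takes a closely related but slightly different tack: rather than establishing the reciprocity identity ${\rm tr}_k M^{-1} = {\rm tr}_{4-k}M/\det M$ at the level of scalar invariants, it observes that $\Lambda$ and $\Lambda^{-1}$ share the same characteristic polynomial (this is exactly the remark ${\rm tr}_k\Lambda^{-1}={\rm tr}_k\Lambda$), applies Cayley--Hamilton to \emph{both} matrices, multiplies the $\Lambda^{-1}$ equation through by $\Lambda^4$, and then compares the two resulting matrix identities to read off $\tau_3=\tau_1$. Your route stays entirely at the scalar level until the very end and yields the stronger palindromic statement ${\rm tr}_k\Lambda = {\rm tr}_{4-k}\Lambda$ for all $k$; the paper's route is more hands-on but, strictly speaking, subtracting the two matrix equations only gives $(\tau_3-\tau_1)(\Lambda^3-\Lambda)=0$, which needs an extra word when $\Lambda^2=I$. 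Your generating-function argument sidesteps that edge case entirely.
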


\begin{proof}
The characteristic polynomials of $\Lambda$ and $\Lambda^{-1}$ are the same, namely $\lambda^4-\tau_1\lambda^3+\tau_2\lambda^2-\tau_3\lambda+1$.  By the Cayley--Hamilton theorem, ($\ast$) $\Lambda^4-\tau_1\Lambda^3+\tau_2\Lambda^2-\tau_3\Lambda+I=0$ and ($\ast\ast$) $\Lambda^{-4}-\tau_1\Lambda^{-3}+\tau_2\Lambda^{-2}-\tau_3\Lambda^{-1}+I=0$.  Multiplying ($\ast\ast$) by $\Lambda^4$ and comparing with ($\ast$), we get $\tau_3=\tau_1$.
\end{proof}

\begin{lemma}\label{lem:spow}
If $\Lambda\in SO^+(g)$, then $\Lambda^k+\Lambda^{-k}=A_kI+B_k(\Lambda+\Lambda^{-1})$, where $A_2=-{\rm tr}_2\Lambda$, $B_2={\rm tr}\Lambda$, $A_3=(2-{\rm tr}_2\Lambda){\rm tr}\,\Lambda$, $B_3={\rm tr}^2\Lambda-{\rm tr}_2\Lambda-1$, $A_4=(2-{\rm tr}_2\Lambda)\,{\rm tr}^2\Lambda+{\rm tr}_2^2\Lambda-2$, and $B_4=({\rm tr}^2\Lambda-2\,{\rm tr}_2\Lambda)\,{\rm tr}\Lambda$
\end{lemma}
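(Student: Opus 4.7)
The plan is to leverage the palindromic shape of the characteristic polynomial supplied by lemma \ref{lem:trace13} and then run a simple Chebyshev-style recursion in the commutative subalgebra generated by $\Lambda$ and $\Lambda^{-1}$.

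First, I would handle $k=2$ directly. Multiplying the relation
$$\Lambda^4-\tau_1\Lambda^3+\tau_2\Lambda^2-\tau_1\Lambda+I=0$$
from lemma \ref{lem:trace13} by $\Lambda^{-2}$ (which is legitimate since $\Lambda\in SO^+(g)$ is invertible) yields
$$\Lambda^2+\Lambda^{-2}=\tau_1(\Lambda+\Lambda^{-1})-\tau_2 I,$$
which is exactly $A_2=-{\rm tr}_2\Lambda$ and $B_2={\rm tr}\Lambda$. Notice that this step is really what forces the palindromic structure: without the equality ${\rm tr}_3\Lambda={\rm tr}\Lambda$, the constant and linear coefficients on the right would not have collapsed into the single combination $\Lambda+\Lambda^{-1}$.

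Next, for $k=3$ and $k=4$ I would use the recursion
$$\Lambda^{k+1}+\Lambda^{-(k+1)}=(\Lambda+\Lambda^{-1})(\Lambda^k+\Lambda^{-k})-(\Lambda^{k-1}+\Lambda^{-(k-1)}),$$
which holds by expanding the right-hand side (the cross terms are precisely $\Lambda^{k-1}+\Lambda^{-(k-1)}$). Applying this with $k=2$ and substituting the $k=2$ identity on the right gives an expression of the form $\alpha I+\beta(\Lambda+\Lambda^{-1})+\gamma(\Lambda^2+\Lambda^{-2})$, and one final substitution of the $k=2$ identity eliminates the quadratic piece and delivers $A_3$ and $B_3$. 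Iterating once more (using the recursion with $k=3$) produces $A_4$ and $B_4$. The intermediate step $(\Lambda+\Lambda^{-1})^2=(\Lambda^2+\Lambda^{-2})+2I$ is the only place where a constant $I$ term is generated, and it is what accounts for the $2-\tau_2$ and $\tau_2^2-2$ summands in $A_3$ and $A_4$.

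There is no conceptual obstacle; the only real work is keeping track of the coefficients through the two rounds of substitution, and I would organize the calculation by always reducing back to the basis $\{I,\,\Lambda+\Lambda^{-1}\}$ after each multiplication by $\Lambda+\Lambda^{-1}$, so that no term of the form $(\Lambda+\Lambda^{-1})^2$ or $\Lambda^j+\Lambda^{-j}$ with $j\geq 2$ is ever left unreduced. The mild bookkeeping of combining $\tau_1(\tau_1^2-\tau_2-1)+\tau_1(1-\tau_2)=\tau_1(\tau_1^2-2\tau_2)$ is the only spot where an error is at all likely, so I would double-check $A_4,B_4$ by verifying that specialization to the identity ($\tau_1=4$, $\tau_2=6$) correctly returns $\Lambda^k+\Lambda^{-k}=2I$ for $k=3,4$.
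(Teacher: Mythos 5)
Your proof is correct and takes essentially the same approach as the paper: both derive the $k=2$ identity by multiplying the palindromic characteristic relation from lemma \ref{lem:trace13} by $\Lambda^{-2}$ and then recurse upward, reducing all expressions to the basis $\{I,\,\Lambda+\Lambda^{-1}\}$. The only difference is cosmetic --- you drive the recursion with the universal Chebyshev identity $\Lambda^{k+1}+\Lambda^{-(k+1)}=(\Lambda+\Lambda^{-1})(\Lambda^{k}+\Lambda^{-k})-(\Lambda^{k-1}+\Lambda^{-(k-1)})$, whereas the paper re-invokes the characteristic relation multiplied by suitable powers of $\Lambda^{-1}$ and adds; both yield the same intermediate relations and hence the same coefficients.
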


\begin{proof}
From lemma \ref{lem:trace13}, we have $\Lambda^2-\tau_1\Lambda+\tau_2I-\tau_1\Lambda^{-1}+\Lambda^{-2}=0$; so that $\Lambda^2+\Lambda^{-2}=-\tau_2I+\tau_1(\Lambda+\Lambda^{-1})$.  Moreover, $\Lambda^3-\tau_1\Lambda^2+\tau_2\Lambda-\tau_1I+\Lambda^{-1}=0$ and $\Lambda^{-3}-\tau_1\Lambda^{-2}+\tau_2\Lambda^{-1}-\tau_1I+\Lambda=0$; thus $\Lambda^3+\Lambda^{-3}=\tau_1(\Lambda^2+\Lambda^{-2})-(\tau_2+1)(\Lambda+\Lambda^{-1})+2\tau_1I=\tau_1(2-\tau_2)I+(\tau_1B_2-\tau_2-1)(\Lambda+\Lambda^{-1})$.  Similarly, we have $\Lambda^4+\Lambda^{-4}=\tau_1(\Lambda^3+\Lambda^{-3})-\tau_2(\Lambda^2+\Lambda^{-2})+\tau_1(\Lambda+\Lambda^{-1})-2I=(\tau_1A_3-\tau_2A_2-2)I+(\tau_1B_3-\tau_2B_2+\tau_1)(\Lambda+\Lambda^{-1})$.
\end{proof}

\subsection{Simple Lorentz transformations}

We will say that a Lorentz transformation is {\bf simple} if it is the exponential of a simple Lorentz bivector.  Our first goal is to give an algebraic criterion for simplicity.  We will make use of the fact that the orthochronous Lorentz group is exponential (see \cite{Nishikawa}); that is, $\exp:so(g)\rightarrow SO^+(g)$ is surjective.

\begin{lemma}\label{lem:lorentztraces}
For nonsimple $\Lambda\in SO^+(g)$, there is a nonsimple Lorentz bivector $L$ with orthogonal decomposition $L=L_++L_-$ such that $\Lambda=\exp(L)$ and ${\rm tr}_2L_\pm=-\mu_\pm$, where $\sqrt{\mu_+}=\cosh^{-1}\tfrac{1}{4}({\rm tr}\Lambda+\sqrt\Delta)$, $\sqrt{-\mu_-}=\cos^{-1}\tfrac{1}{4}({\rm tr}\Lambda-\sqrt\Delta)$, and $\Delta={\rm tr}^2\Lambda-4{\rm tr}_2\Lambda+8$.  Moreover, we have ${\rm tr}\Lambda=2(c_++c_-)$ and ${\rm tr}_2\Lambda=4c_+c_-+2$, where $c_+\doteq\cosh\sqrt{\mu_+}$, $c_-\doteq\cos\sqrt{-\mu_-}$.
\end{lemma}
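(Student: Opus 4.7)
The plan is to invoke the cited exponentiality of $SO^+(g)$, apply the orthogonal decomposition of Corollary \ref{cor:bivector} together with the exponential formula of Theorem \ref{thm:exp} to read off the two trace identities, and then invert those for $c_\pm$ in terms of $\text{tr}\,\Lambda$ and $\text{tr}_2\Lambda$.

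First I would fix some $L\in so(g)$ with $\exp(L)=\Lambda$, which exponentiality supplies. If this $L$ were simple then $\Lambda$ would be simple by definition, contradicting the hypothesis, so $L$ is nonsimple and Corollary \ref{cor:bivector} gives $L=L_++L_-$ with $\text{tr}_2L_\pm=-\mu_\pm$. Theorem \ref{thm:exp} then produces $\Lambda=I+d_1^+L_++d_1^-L_-+d_2^+L_+^2+d_2^-L_-^2$. Taking traces, the linear terms vanish by Lemma \ref{lem:Ltr13}, while Lemma \ref{lem:pow} gives $\text{tr}\,L_\pm^2=2\mu_\pm$. Plugging in the explicit $d_2^\pm$ collapses everything to $\text{tr}\,\Lambda=2(c_++c_-)$.

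For $\text{tr}_2\Lambda$, rather than expand $\Lambda^2$ term by term — which would proliferate $L_\pm^3$ and $L_\pm^4$ contributions to be reduced via Lemma \ref{lem:3} — I would instead use equation \eqref{eq:2tr} in the form $\text{tr}_2\Lambda=\tfrac{1}{2}(\text{tr}^2\Lambda-\text{tr}\,\Lambda^2)$ and compute $\text{tr}\,\Lambda^2$ by reapplying what was just established to $\Lambda^2=\exp(2L)$. Since $2L=2L_++2L_-$ is again an orthogonal decomposition (the summands remain simple and still annihilate each other), with squared-roots $4\mu_\pm$, the trace formula just derived immediately gives $\text{tr}\,\Lambda^2=2\cosh(2\sqrt{\mu_+})+2\cos(2\sqrt{-\mu_-})$, which the double-angle identities simplify to $4c_+^2+4c_-^2-4$. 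A brief algebraic manipulation then yields $\text{tr}_2\Lambda=4c_+c_-+2$.

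With both trace identities in hand, $c_+$ and $c_-$ are the roots of the quadratic $z^2-\tfrac{1}{2}(\text{tr}\,\Lambda)z+\tfrac{1}{4}(\text{tr}_2\Lambda-2)=0$, whose discriminant is $\tfrac{1}{4}\Delta$; that $\Delta\geq 0$ follows from the identity $\Delta=4(c_+-c_-)^2$ verifiable by direct substitution. The sign assignment is forced by $c_+=\cosh\sqrt{\mu_+}\geq 1$ while $c_-=\cos\sqrt{-\mu_-}\leq 1$, giving $c_\pm=\tfrac{1}{4}(\text{tr}\,\Lambda\pm\sqrt{\Delta})$; applying $\cosh^{-1}$ and $\cos^{-1}$ then delivers the stated expressions for $\sqrt{\mu_\pm}$. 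The main obstacle I foresee is obtaining $\text{tr}\,\Lambda^2$ without wading through a messy polynomial expansion, which the substitution $\Lambda^2=\exp(2L)$ sidesteps completely.
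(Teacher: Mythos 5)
Your proof is correct and follows the same overall structure as the paper's (invoke exponentiality, obtain a nonsimple $L$, apply Theorem~\ref{thm:exp} to get ${\rm tr}\,\Lambda=2(c_++c_-)$, then deduce ${\rm tr}_2\Lambda$ and invert), but you take a genuinely cleaner route at one step. Where the paper computes ${\rm tr}\,\Lambda^2$ by explicitly squaring the formula from Theorem~\ref{thm:exp} (producing the auxiliary coefficients $A_\pm$ in the intermediate formula for $\Lambda^2$, which then have to be simplified term by term), you instead notice that $\Lambda^2=\exp(2L)$, that $2L=2L_++2L_-$ is still an orthogonal decomposition of a nonsimple bivector, and that its squared-roots scale as $\mu_\pm(2L)=4\mu_\pm(L)$; re-applying the already-derived first-trace identity and the double-angle formulas immediately gives ${\rm tr}\,\Lambda^2=4c_+^2+4c_-^2-4$, which is exactly the paper's $(\star)$. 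This is a nice structural shortcut: it turns a direct polynomial expansion into a single substitution, at the cost of having to observe the scaling $\mu_\pm\mapsto 4\mu_\pm$ and that nonsimplicity and the annihilation property are preserved under scaling (both of which you correctly check). The closing inversion step, including the identification of the discriminant as $\tfrac{1}{4}\Delta$ and the sign assignment forced by $c_+\geq 1\geq c_-$, matches the paper's claim exactly.
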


\begin{proof}
Since $\Lambda$ is not simple, $\Lambda=\exp(L)$ for some nonsimple Lorentz bivector $L$, and theorem \ref{thm:exp} applies.  Moreover, using lemma \ref{lem:3} one computes
\begin{equation}\label{eq:2}
  \Lambda^2
  =I+2d_1^+(1+\mu_+d_2^+)L_++2d_1^-(1+\mu_-d_2^-)L_-
   +A_+L_+^2+A_-L_-^2
\end{equation}
where $A_\pm\doteq(2d_2^\pm+d_1^{\pm 2}+\mu_\pm d_2^{\pm 2})=2(c_\pm^2-1)/\mu_\pm$.  Computing the traces in theorem \ref{thm:exp} and equation \eqref{eq:2}, we obtain ${\rm tr}\Lambda=4+d_2^+{\rm tr}L_+^2+d_2^-{\rm tr}L_-^2$ and ${\rm tr}\Lambda^2=4+A_+{\rm tr}L_+^2+A_-{\rm tr}L_-^2$.  Using ${\rm tr}L_\pm^2=-2{\rm tr}_2L_\pm=2\mu_\pm$, one then computes that ($\star$) ${\rm tr}\Lambda=2(c_++c_-)$ and ${\rm tr}\Lambda^2=4(c_+^2+c_-^2-1)$.  Consequently, $(\star\star$) ${\rm tr}_2\Lambda=\tfrac{1}{2}({\rm tr}^2\Lambda-{\rm tr}\Lambda^2)=4c_+c_-+2$.  Equations ($\star$) and ($\star\star$) can then be solved for $c_\pm$ to obtain the formulas in the statement of the lemma.
\end{proof}

\begin{lemma}\label{lem:simplecriterion}
$\Lambda$ is simple if and only if ${\rm tr}_2\Lambda=2{\rm tr}\Lambda-2$ and ${\rm tr}\Lambda\geq 0$.
\end{lemma}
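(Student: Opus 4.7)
The plan is to attack both directions via the spectrum of $\Lambda$.

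For the forward direction, I would write $\Lambda=\exp(L_0)$ with $L_0$ simple. By Lemma \ref{lem:3}, $L_0^3=-({\rm tr}_2L_0)L_0$, so the characteristic polynomial of $L_0$ is $\lambda^2(\lambda^2+{\rm tr}_2L_0)$ and its eigenvalues are $0,0,\alpha,-\alpha$ with $\alpha^2=-{\rm tr}_2L_0$ (real or purely imaginary). Exponentiating, $\Lambda$ has eigenvalues $1,1,e^\alpha,e^{-\alpha}$, so its characteristic polynomial factors as $(\lambda-1)^2(\lambda^2-s\lambda+1)$ with $s\doteq e^\alpha+e^{-\alpha}$. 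A brief expansion gives $\lambda^4-(s+2)\lambda^3+(2s+2)\lambda^2-(s+2)\lambda+1$; comparing with $\lambda^4-\tau_1\lambda^3+\tau_2\lambda^2-\tau_1\lambda+1$ (using Lemma \ref{lem:trace13}) yields $\tau_1=s+2$ and $\tau_2=2s+2=2\tau_1-2$. Since $s=2\cosh\sqrt{-{\rm tr}_2L_0}\geq 2$, or $s=2\cos\sqrt{{\rm tr}_2L_0}\in[-2,2]$, or $s=2$ in the three sign cases of ${\rm tr}_2L_0$, we have $\tau_1=s+2\geq 0$ in all cases.

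For the converse, I would invoke the surjectivity of the exponential onto $SO^+(g)$ (cited above Lemma \ref{lem:lorentztraces}) to write $\Lambda=\exp(L)$ for some $L\in so(g)$. If $L$ is simple, then $\Lambda$ is simple by definition. Otherwise, Lemma \ref{lem:lorentztraces} applies to the orthogonal decomposition $L=L_++L_-$ and gives ${\rm tr}\Lambda=2(c_++c_-)$ and ${\rm tr}_2\Lambda=4c_+c_-+2$. The hypothesis ${\rm tr}_2\Lambda=2{\rm tr}\Lambda-2$ rearranges to $(c_+-1)(c_--1)=0$, and since $L$ nonsimple forces $\mu_+>0$ strictly, we have $c_+=\cosh\sqrt{\mu_+}>1$, so $c_-=1$. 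Combined with $\mu_-<0$ (again by nonsimplicity), this forces $\sqrt{-\mu_-}=2\pi k$ for some integer $k\geq 1$.

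The main obstacle is then to conclude that $\Lambda$ is simple despite $L$ being nonsimple. For this I would apply Theorem \ref{thm:expsimple} to the simple bivector $L_-$ with ${\rm tr}_2L_-=4\pi^2k^2$: both $d_1=\sin(2\pi k)/(2\pi k)=0$ and $d_2=(1-\cos(2\pi k))/{\rm tr}_2L_-=0$, so $\exp(L_-)=I$. By Corollary \ref{cor:bivector}, $L_+$ and $L_-$ commute, so $\Lambda=\exp(L_+)\exp(L_-)=\exp(L_+)$ with $L_+$ simple, exhibiting $\Lambda$ as simple. The hypothesis ${\rm tr}\Lambda\geq 0$ turns out to be automatic within $SO^+(g)$---it follows from the spectral analysis in the forward direction and from $c_+>1$, $c_-\geq-1$ in the nonsimple case---so its role in the statement is to provide a directly verifiable algebraic check rather than to exclude spurious solutions.
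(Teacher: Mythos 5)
Your proof is correct, and the forward direction takes a genuinely different route from the paper's. The paper computes ${\rm tr}\Lambda$ and ${\rm tr}\Lambda^2$ directly from the explicit exponential formula of Theorem \ref{thm:expsimple} and case-splits on the sign of ${\rm tr}_2L$, whereas you read the invariants off the characteristic polynomial of $\Lambda$ from the eigenvalues $1,1,e^\alpha,e^{-\alpha}$, which makes the identity $\tau_2=2\tau_1-2$ and the bound $\tau_1=s+2\geq 0$ transparent at the modest price of appealing to the spectral mapping property of the matrix exponential. One small citation slip: the characteristic polynomial $\lambda^2(\lambda^2+{\rm tr}_2L_0)$ should be justified by equation \eqref{eq:charL} together with $\det L_0=0$ (Lemma \ref{lem:simple}), since Lemma \ref{lem:3} by itself only constrains the minimal polynomial. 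Your converse is essentially the paper's argument recast as a direct proof rather than a contrapositive, with the additional (correct) explicit check via Theorem \ref{thm:expsimple} that $\exp(L_-)=I$ when $\sqrt{-\mu_-}\in 2\pi\mathbb{Z}$; the paper cites Theorem \ref{thm:exp} for the same conclusion. A minor technical point: Lemma \ref{lem:lorentztraces} is stated for nonsimple $\Lambda$, but you apply its trace formulas in a situation where $\Lambda$ may turn out to be simple and only the chosen logarithm $L$ is nonsimple; you should note that the derivation of those formulas in the lemma's proof depends only on the nonsimplicity of $L$, not of $\Lambda$. Finally, your observation that ${\rm tr}\Lambda\geq 0$ is a consequence of ${\rm tr}_2\Lambda=2{\rm tr}\Lambda-2$ within $SO^+(g)$ is correct and is also implicit in the paper, whose converse never invokes that inequality.
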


\begin{proof}
If $\Lambda$ is simple, then $\Lambda=\exp(L)$ for some simple Lorentz bivector $L$.  From theorem \ref{thm:expsimple} and lemma \ref{lem:pow}, ($\ast$) $\tau_1=4-2d_2l_2$.  Moreover, we have $\Lambda^2=I+2d_1L+(2d_2+d_1^2)L^2+2d_1d_2L^3+d_2L^4$, and one computes $\tau_2=\tfrac{1}{2}(\tau_1^2-{\rm tr}\Lambda^2)=6+(d_1^2-6d_2)l_2+d_2^2l_2^2$ using lemma \ref{lem:pow} and $\det{L}=0$.  Thus ($\ast\ast$) $\tau_2-2\tau_1+2=(d_1^2-2d_2)l_2+d_2^2l_2^2$.  By theorem \ref{thm:expsimple}, if $l_2<0$, then $d_2l_2=1-\cosh{x}$ and $d_1^2l_2=-\sinh^2{x}=1-\cosh^2{x}$.  Thus ($\ast$) yields $\tau_1=2(1+\cosh{x})\geq 4$, and ($\ast\ast$) yields $\tau_2-2\tau_1+2=0$.  Similarly, if $l_2>0$, then $\tau_1=2(1+\cos{y})\geq 0$ and $\tau_2-2\tau_1+2=0$; and if $l_2=0$, then $\tau_1=4$ and $\tau_2-2\tau_1+2=0$.

Now suppose that $\Lambda$ is not simple, and write $\Lambda=\exp(L)$ as in lemma \ref{lem:lorentztraces}.  Thus $\tau_2-2\tau_1+2=-4(c_+-1)(1-c_-)$, which is only zero when $c_+=1$ or $c_-=1$.  As $\Lambda$ is nonsimple, $\mu_+>0$ and $\mu_-<0$; whence $c_+>1$, and $c_-=1$ only if $\sqrt{-\mu_-}$ is a nonzero multiple of $2\pi$.  However in the latter case, theorem \ref{thm:exp} implies that $\exp(L)=\exp(L_+)$, which cannot be the case, since $\Lambda$ is assumed to be nonsimple.
\end{proof}

We note that the proof of lemma \ref{lem:simplecriterion} implies that if $\Lambda$ is nonsimple, then $c_+>1$ and $-1\leq c_-<1$.  In particular, ${\rm tr}\Lambda>0$.

\subsection{Simple logarithm}

\begin{theorem}\label{thm:simplelog}
Suppose $\Lambda\in SO^+(g)$ is simple.  For ${\rm tr}\Lambda>0$, let us define $L_\Lambda=\tfrac{1}{2}k(\Lambda-\Lambda^{-1})$, where
\begin{description}
  \item{(i)} $\displaystyle k=\frac{\sqrt{\mu}}{\sinh\sqrt{\mu}}$ and $\sqrt{\mu}=\cosh^{-1}(\tfrac{1}{2}{\rm tr}\Lambda-1)$, if ${\rm tr}\Lambda>4$
  \item{(ii)} $\displaystyle k=\frac{\sqrt{-\mu}}{\sin\sqrt{-\mu}}$ and $\sqrt{-\mu}=\cos^{-1}(\tfrac{1}{2}{\rm tr}\Lambda-1)$, if $0<{\rm tr}\Lambda<4$
  \item{(iii)} $k=1$ and $\mu=0$, if ${\rm tr}\Lambda=4$
\end{description}
Then $L_\Lambda$ is a simple Lorentz bivector with $\exp(L_\Lambda)=\Lambda$ and ${\rm tr}_2L_\Lambda=-\mu$.  For ${\rm tr}\Lambda=0$, $P_\Lambda=\tfrac{1}{2}(I-\Lambda)$ is orthogonal projection onto a nondegenerate two--plane such that $-\pi^2P_\Lambda$ is the square of a simple Lorentz bivector $L$ with $\exp{L}=\Lambda$ and ${\rm tr}_2L=\pi^2$.
\end{theorem}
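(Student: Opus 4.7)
The plan is to use the fact that a simple Lorentz transformation is by definition $\Lambda=\exp(L)$ for some simple bivector $L$, and then read off $L$ from $\Lambda$ via Theorem \ref{thm:expsimple}. That theorem gives $\Lambda=I+d_1L+d_2L^2$, where $d_1,d_2$ depend only on $\mathrm{tr}_2L$. Since $\mathrm{tr}_2(-L)=(-1)^2\mathrm{tr}_2L=\mathrm{tr}_2L$ (property (iii) in section \ref{sec:traces}), the same coefficients $d_1,d_2$ work for $-L$, so $\Lambda^{-1}=\exp(-L)=I-d_1L+d_2L^2$ and therefore
$$\Lambda-\Lambda^{-1}=2d_1L.$$
Whenever $d_1\neq 0$ this inverts to $L=\tfrac{1}{2d_1}(\Lambda-\Lambda^{-1})$, so the task reduces to expressing $1/d_1$ in terms of $\mathrm{tr}\Lambda$ and to computing $\mu=-\mathrm{tr}_2L$.

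For the trace relation I would reuse the first half of the proof of lemma \ref{lem:simplecriterion}, which establishes (for simple $\Lambda$) the three cases $\mathrm{tr}\Lambda=2(1+\cosh\sqrt{\mu})$ when $\mu>0$, $\mathrm{tr}\Lambda=2(1+\cos\sqrt{-\mu})$ when $\mu<0$, and $\mathrm{tr}\Lambda=4$ when $\mu=0$. Solving these for the principal branches of $\cosh^{-1}$ and $\cos^{-1}$ partitions the simple case according to whether $\mathrm{tr}\Lambda>4$, $0\leq\mathrm{tr}\Lambda<4$, or $\mathrm{tr}\Lambda=4$, and in each range yields exactly the $\sqrt{\mu}$ (resp.\ $\sqrt{-\mu}$, resp.\ $\mu=0$) stated in the theorem. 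Plugging these into the theorem \ref{thm:expsimple} formulas for $d_1$ gives $k=1/d_1$ exactly as written in (i)--(iii). In case (iii) the singularity is removable: as $\mathrm{tr}_2L\to 0$ both $\sinh\sqrt{\mu}/\sqrt{\mu}$ and $\sin\sqrt{-\mu}/\sqrt{-\mu}$ tend to $1$, consistent with $k=1$. That $L_\Lambda$ so defined is a simple Lorentz bivector with the stated second-order trace is then automatic, because it coincides with the $L$ we started from.

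The genuine obstacle is the remaining edge case $\mathrm{tr}\Lambda=0$, where the argument above breaks: $\cos\sqrt{-\mu}=-1$ forces $\sqrt{-\mu}=\pi$ (principal value), hence $\mathrm{tr}_2L=\pi^2$ and $d_1=\sin\pi/\pi=0$, so $\Lambda-\Lambda^{-1}=0$ determines nothing. Here I would instead extract $L$ from the quadratic term: with $\mathrm{tr}_2L=\pi^2$ the coefficient $d_2=(1-\cos\pi)/\pi^2=2/\pi^2$, so $\Lambda=I+(2/\pi^2)L^2$ rearranges to
$$L^2=\tfrac{\pi^2}{2}(\Lambda-I)=-\pi^2\cdot\tfrac{1}{2}(I-\Lambda)=-\pi^2P_\Lambda.$$
To conclude that $P_\Lambda$ is orthogonal projection onto a nondegenerate two-plane, I appeal directly to lemma \ref{lem:2plane}: since $L$ is simple with $\mathrm{tr}_2L\neq 0$, the operator $-L^2/\mathrm{tr}_2L$ is such a projection, and the computation above shows this operator equals $P_\Lambda$. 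Thus $-\pi^2P_\Lambda=L^2$ is the square of a simple Lorentz bivector $L$ exponentiating to $\Lambda$ with $\mathrm{tr}_2L=\pi^2$, as claimed. The principal source of care throughout will be tracking the sign of $\mathrm{tr}_2L$ against the range of $\mathrm{tr}\Lambda$, and confirming that the exceptional values $\mathrm{tr}\Lambda=4$ and $\mathrm{tr}\Lambda=0$ sit at the correct boundaries.
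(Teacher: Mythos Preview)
Your proposal is correct and follows essentially the same route as the paper: write $\Lambda=\exp(L)$ with $L$ simple, use theorem~\ref{thm:expsimple} to get $\tfrac{1}{2}(\Lambda-\Lambda^{-1})=d_1L$, recover $l_2$ from ${\rm tr}\,\Lambda$ via the trace identities already established in the proof of lemma~\ref{lem:simplecriterion}, and in the degenerate case ${\rm tr}\,\Lambda=0$ fall back on the $L^2$ term together with lemma~\ref{lem:2plane}. One small caveat: your phrase ``it coincides with the $L$ we started from'' is not literally true when the original $\sqrt{l_2}$ lies outside the principal range of $\cos^{-1}$---in that case $L_\Lambda$ is only a scalar multiple of $L$---but since a scalar multiple of a simple bivector is still simple and one checks directly that $\exp(L_\Lambda)=\Lambda$, the conclusion stands (the paper glosses over the same point).
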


Thus if ${\rm tr}\Lambda\neq 0$, then $L_\Lambda$ may be taken as the logarithm of $\Lambda$: $L_\Lambda=\log\Lambda$.  In the case ${\rm tr}\Lambda=0$, we may reconstruct $\log\Lambda$ from the two independent four--vectors in the image of $P_\Lambda$; it should be noted that in this case, $\Lambda$ is necessarily an involution: $\Lambda^2=I$.  In general, the logarithm is not unique, see \cite{Shaw}.  In particular if $L$ is simple with ${\rm tr}_2L=2n\pi$, for any integer $n$, then $\Lambda\doteq\exp(L)=I$ by theorem \ref{thm:expsimple}.  In this case, the above theorem gives $L_\Lambda=0$.  A version of the formula for $\log\Lambda$ in the case ${\rm tr}\Lambda\neq 0$ that involves algebra over the complex numbers appears in \cite{Coll}.

\begin{proof}
Write $\Lambda=\exp(L)$ for some simple $L$.  By theorem \ref{thm:expsimple}, we have ($\star$) $d_1L=\tfrac{1}{2}(\Lambda-\Lambda^{-1})$.  As in the proof of lemma \ref{lem:simplecriterion}, ($\star\star$) $\tau_1=4-2d_2l_2$.  

From theorem \ref{thm:expsimple}, we see that $d_1=0$ only if $l_2>0$ and $\sqrt{l_2}$ is a nonzero multiple of $\pi$; by the periodicity of the sine and cosine, we may assume that $l_2=\pi^2$.  In this case, $d_2=2/\pi^2$ and by ($\star\star$), $\tau_1=0$.  Conversely, if $\tau_1=0$, then ($\star\star$) implies $d_2l_2=2$, which only happens if $l_2>0$ and $\sqrt{l_2}$ is a nonzero multiple of $\pi$.  Thus if $\tau_1=0$, then $\Lambda=I+(2/\pi^2)L^2=I-2P_L$, with $P_L$ as in lemma \ref{lem:2plane}. Thus, $P_L=P_\Lambda$.

If $\tau_1\neq 0$, then $d_1\neq 0$; and from ($\star$) and theorem \ref{thm:expsimple}, we only need to deduce the value of $l_2$ in terms of $\Lambda$ in order to obtain the formulas (i)---(iii).  If we assume that $l_2<0$, then theorem \ref{thm:expsimple} states that $d_2l_2=1-\cosh{x}$; and so ($\star\star$) can be solved to yield $\cosh{x}=\tfrac{1}{2}\tau_1-1$.  Note that this equation has a solution for $x>0$ if and only if $\tau_1>4$.  The cases when $l_2>0$ and $l_2=0$ are handled similarly.
\end{proof}

\subsection{Decomposition into simple factors}\label{ssec:factordecomp}

In the case when $\Lambda$ is nonsimple, then $L=\log\Lambda$ is a nonsimple bivector, and we have the orthogonal decomposition $L=L_++L_-$ into simple mutually annihilating summands.  Thus $\Lambda=\exp(L_+)\exp(L_-)$ is a product of commuting simple factors.  We give explicit formulas.

\begin{theorem}\label{thm:lorentzproj}
If $\Lambda\in SO^+(g)$ is nonsimple, then $\Lambda=\exp(L)$ where $L$ is a nonsimple Lorentz bivector with projection operators
$$P_\pm=\pm\frac{\tfrac{1}{2}(\Lambda+\Lambda^{-1})-c_\mp I}{c_+-c_-}$$
where $c_\pm$ are as in lemma \ref{lem:lorentztraces}.
\end{theorem}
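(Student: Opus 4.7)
The plan is to express $\tfrac{1}{2}(\Lambda + \Lambda^{-1})$ as an explicit linear combination of the algebraic projectors $P_\pm$ from equation~\eqref{eq:proj}, and then invert that relation to recover $P_\pm$ in terms of $\Lambda$.

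First I would invoke lemma~\ref{lem:lorentztraces} to choose a nonsimple $L$ with $\Lambda = \exp(L)$ and orthogonal decomposition $L = L_+ + L_-$, which supplies the $c_\pm$ and $\mu_\pm$ appearing in the theorem. Theorem~\ref{thm:exp} expands $\exp(L)$ explicitly. Since the coefficients $d_1^\pm, d_2^\pm$ depend only on $\mu_\pm$ (invariants of even powers of $L$, hence unchanged under $L \mapsto -L$) while the summands $L_\pm$ each flip sign, the series for $\Lambda^{-1} = \exp(-L)$ differs from that for $\Lambda$ only in the signs of the linear terms. Averaging kills the linear contributions and leaves
$$\tfrac{1}{2}(\Lambda + \Lambda^{-1}) = I + d_2^+ L_+^2 + d_2^- L_-^2.$$

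The key reduction is to recognize $L_\pm^2 = \mu_\pm P_\pm$. Since $L_\pm = P_\pm L$ (built into corollary~\ref{cor:bivector} and~\eqref{eq:proj}), with $P_\pm$ commuting with $L$ and satisfying $P_\pm^2 = P_\pm$, we have $L_\pm^2 = P_\pm L^2$; substituting $L^4 = (\mu_+ + \mu_-)L^2 - \mu_+\mu_-I$ from lemma~\ref{lem:4} into this and collecting terms gives $L_\pm^2 = \mu_\pm P_\pm$. Inserting this along with $d_2^\pm = (c_\pm - 1)/\mu_\pm$ into the previous display, and using $P_+ + P_- = I$ from theorem~\ref{thm:decomp}(i), one obtains
$$\tfrac{1}{2}(\Lambda + \Lambda^{-1}) = c_+ P_+ + c_- P_-.$$

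Together with $P_+ + P_- = I$, this is a $2\times 2$ linear system in $P_+$ and $P_-$ whose determinant $c_+ - c_-$ is nonzero by the remark following lemma~\ref{lem:simplecriterion} ($c_+ > 1 \geq c_-$ whenever $\Lambda$ is nonsimple), so it can be solved directly to produce the stated formulas. The only step requiring genuine bookkeeping rather than unwinding definitions is the identity $L_\pm^2 = \mu_\pm P_\pm$, but this is a one-line computation using the Cayley--Hamilton relation of lemma~\ref{lem:4}; there is no real obstacle beyond this.
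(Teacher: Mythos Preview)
Your argument is correct and in fact cleaner than the paper's. Both proofs begin identically by averaging $\Lambda$ and $\Lambda^{-1}$ via theorem~\ref{thm:exp} to obtain $\tfrac{1}{2}(\Lambda+\Lambda^{-1})=I+d_2^+L_+^2+d_2^-L_-^2$. At that point the paper takes a longer detour: it also computes $\tfrac{1}{2}(\Lambda^2+\Lambda^{-2})$ from equation~\eqref{eq:2}, invokes lemma~\ref{lem:spow} to rewrite it, and solves the resulting $2\times 2$ linear system for $L_+^2$ and $L_-^2$ separately; only then does it assemble $L^2=L_+^2+L_-^2$ and feed that into the defining formula~\eqref{eq:proj} for $P_\pm$. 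You instead recognize the spectral identity $L_\pm^2=\mu_\pm P_\pm$ directly (an immediate consequence of $P_\pm L^2=\mu_\pm P_\pm$, since $P_\pm$ is by construction the spectral projector of $L^2$ for the eigenvalue $\mu_\pm$), which collapses the first display to $\tfrac{1}{2}(\Lambda+\Lambda^{-1})=c_+P_++c_-P_-$ and lets you solve against $P_++P_-=I$. Your route avoids $\Lambda^{\pm 2}$ and lemma~\ref{lem:spow} entirely; the paper's route has the minor advantage that the same machinery (the square terms from~\eqref{eq:2}) is reused in the subsequent theorem on $L_\pm$, but for the present statement your argument is strictly shorter.
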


\begin{proof}
From theorem \ref{thm:exp}, we have ($\ast$) $\Lambda^{-1}=\exp(-L)=I-d_1^+L_+-d_1^-L_-+d_2^+L_+^2+d_2^-L_-^2$, and we obtain the equation $\tfrac{1}{2}(\Lambda+\Lambda^{-1})=I+d_2^+L_+^2+d_2^-L_-^2$, which we rewrite as ($\circ$) $d_2^+L_+^2+d_2^-L_-^2=-I+\tfrac{1}{2}(\Lambda+\Lambda^{-1})$.  In addition, from equation \eqref{eq:2}, we have $\tfrac{1}{2}(\Lambda^2+\Lambda^{-2})=I+A_+L_+^2+A_-L_-^2$.  However, by lemma \ref{lem:spow}, this equation may be rewritten as ($\circ\circ$) $A_+L_+^2+A_-L_-^2=-\tfrac{1}{2}(\tau_2+2)I+\tfrac{1}{2}\tau_1(\Lambda+\Lambda^{-1})$.  Equations ($\circ$) and ($\circ\circ$) form a linear system in $L_\pm^2$, whose determinant is computed to be $2(c_+-1)(1-c_-)(c_+-c_-)/\mu_+\mu_-$, which is never zero.  After inverting the system ($\circ$), ($\circ\circ$) , one computes
$$L^2=L_+^2+L_-^2
   = \frac{1}{c_+-c_-}(\mu_-c_+-\mu_+c_-)I
       +\tfrac{1}{2}(\mu_+-\mu_-)(\Lambda+\Lambda^{-1})
$$
Equation \eqref{eq:proj} then yields the desired formulas for $P_\pm$.
\end{proof}

\begin{theorem}
Suppose $\Lambda\in SO^+(g)$ is nonsimple.  Let $c_\pm$, and $\mu_\pm$ be defined as in lemma \ref{lem:lorentztraces}.  If $\mu_-\neq -\pi^2$, then $\Lambda=\exp(L)$ with $L$ a nonsimple Lorentz bivector whose orthogonal decomposition $L=L_++L_-$ is given by
$$L_\pm=\mp\frac{1}{(c_+-c_-)s_\pm}
        \left\{\tfrac{1}{2}c_\mp(\Lambda-\Lambda^{-1})
              -\tfrac{1}{4}(\Lambda^2-\Lambda^{-2})\right\}
$$
where $s_+\doteq\sinh\sqrt{\mu_+}/\sqrt{\mu_+}$ and $s_-\doteq\sin\sqrt{-\mu_-}/\sqrt{-\mu_-}$.  Moreover, ${\rm tr}_2L_\pm=-\mu_\pm$.
\end{theorem}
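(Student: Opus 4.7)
The plan is to invert the system already half--set up in the proof of theorem \ref{thm:lorentzproj}. The previous theorem used the even combinations $\tfrac{1}{2}(\Lambda+\Lambda^{-1})$ and $\tfrac{1}{2}(\Lambda^2+\Lambda^{-2})$ to recover $L_\pm^2$; here I will use the odd combinations $\tfrac{1}{2}(\Lambda-\Lambda^{-1})$ and $\tfrac{1}{4}(\Lambda^2-\Lambda^{-2})$ to recover $L_\pm$ themselves.

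First I would write $\Lambda=\exp(L)$ (this is possible since $SO^+(g)$ is exponential) with $L$ nonsimple, and apply theorem \ref{thm:exp} to obtain $\exp(L)$ and $\exp(-L)$. Subtracting gives the clean identity
\begin{equation*}
  \tfrac{1}{2}(\Lambda-\Lambda^{-1})=d_1^+L_++d_1^-L_-,
\end{equation*}
since the $L_\pm^2$ terms cancel. Next I would apply the same reasoning to $\Lambda^2=\exp(2L)$: equation \eqref{eq:2} (in the proof of lemma \ref{lem:lorentztraces}) gives the $L_\pm$--coefficients of $\Lambda^2$, and replacing $L$ by $-L$ gives $\Lambda^{-2}$; subtracting yields
\begin{equation*}
  \tfrac{1}{4}(\Lambda^2-\Lambda^{-2})=d_1^+(1+\mu_+d_2^+)L_++d_1^-(1+\mu_-d_2^-)L_-.
\end{equation*}
The key algebraic simplification is then the identity $1+\mu_\pm d_2^\pm=c_\pm$, which follows directly from the formulas for $d_2^\pm$ in theorem \ref{thm:exp}.

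With these two equations in hand, I have a $2\times2$ linear system in the unknowns $d_1^+L_+$ and $d_1^-L_-$ whose coefficient matrix is $\bigl(\begin{smallmatrix}1 & 1\\ c_+ & c_-\end{smallmatrix}\bigr)$, with determinant $c_--c_+$. By lemma \ref{lem:lorentztraces} and the remark after lemma \ref{lem:simplecriterion}, $c_+>1\geq c_-$, so the determinant is nonzero. Cramer's rule then yields
\begin{equation*}
  d_1^\pm L_\pm=\pm\frac{1}{c_--c_+}\left\{c_\mp\cdot\tfrac{1}{2}(\Lambda-\Lambda^{-1})-\tfrac{1}{4}(\Lambda^2-\Lambda^{-2})\right\},
\end{equation*}
and dividing by $d_1^\pm=s_\pm$ produces the stated formulas (the sign flip converting $c_--c_+$ to $c_+-c_-$ absorbs into the leading $\mp$). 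Finally, ${\rm tr}_2L_\pm=-\mu_\pm$ is already part of corollary \ref{cor:bivector}.

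The only obstacle is verifying $s_\pm\neq 0$ so that the division is legitimate. Since $\mu_+>0$, one has $s_+=\sinh\sqrt{\mu_+}/\sqrt{\mu_+}>0$ automatically. For $s_-$, the formula $\sqrt{-\mu_-}=\cos^{-1}\tfrac{1}{4}({\rm tr}\Lambda-\sqrt\Delta)$ from lemma \ref{lem:lorentztraces} restricts $\sqrt{-\mu_-}$ to $[0,\pi]$, and $\mu_-<0$ excludes the endpoint $0$. Hence $s_-=0$ occurs precisely when $\sqrt{-\mu_-}=\pi$, i.e.\ $\mu_-=-\pi^2$, which is exactly the hypothesis excluded in the theorem. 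This hypothesis is natural: when $\mu_-=-\pi^2$, theorem \ref{thm:exp} forces $\exp(L_-)=I$ (since $d_1^-=0$ and $d_2^-\mu_-=-2$ makes the $L_-^2$ contribution collapse onto a projection), so $L_-$ cannot be recovered from $\Lambda$ by a polynomial expression alone.
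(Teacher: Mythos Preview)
Your proof is correct and follows essentially the same route as the paper: you set up the two odd equations $\tfrac{1}{2}(\Lambda-\Lambda^{-1})=d_1^+L_++d_1^-L_-$ and $\tfrac{1}{4}(\Lambda^2-\Lambda^{-2})=d_1^+c_+L_++d_1^-c_-L_-$ and invert, exactly as the paper does (the paper phrases the determinant as $-(c_+-c_-)s_+s_-$, treating $L_\pm$ rather than $d_1^\pm L_\pm$ as unknowns, but this is cosmetic). One small slip in your closing remark: when $\mu_-=-\pi^2$ one gets $\exp(L_-)=I-2P_{L_-}$, an involution, not the identity; the point that $L_-$ drops out of the odd combinations (since $d_1^-=0$) is nonetheless correct.
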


\begin{proof}
From equation \eqref{eq:2} and equation ($\ast$) in the proof of the previous theorem, we obtain the equations $\tfrac{1}{2}(\Lambda-\Lambda^{-1})=d_1^+L_++d_1^-L_-$ and $\tfrac{1}{4}(\Lambda^2-\Lambda^{-2})=d_1^+(1+\mu_+d_2^+)L_++d_1^-(1+\mu_-d_2^-)L_-$.  These define a linear system for $L_\pm$.  The determinant is $-(c_+-c_-)s_+s_-$, which is zero only when $\sqrt{-\mu_-}$ is a nonzero multiple of $\pi$.  Inverting the linear system yields the formulas in the statement of the theorem.
\end{proof}

\begin{theorem}
For nonsimple $\Lambda\in SO^+(g)$, $\Lambda=\Lambda_+\Lambda_-$, where $\Lambda_\pm$ are the commuting simple Lorentz transformations given by
$$\Lambda_\pm=\pm\frac{1}{2(c_+-c_-)}
    \left\{(1+2c_\pm)I-\Lambda^{-1}-(1+2c_\mp)\Lambda+\Lambda^2\right\}
$$
where $c_\pm$ are as in lemma \ref{lem:lorentztraces}.
\end{theorem}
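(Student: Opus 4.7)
The approach is to exponentiate the orthogonal decomposition of $L=\log\Lambda$. Since $\Lambda$ is nonsimple, surjectivity of $\exp$ on $SO^+(g)$ gives $\Lambda=\exp L$ for a nonsimple $L$, whose orthogonal decomposition $L=L_++L_-$ (corollary~\ref{cor:bivector}) satisfies $L_+L_-=0$. Setting $\Lambda_\pm\doteq\exp L_\pm$, the commutativity, simplicity, and identity $\Lambda=\Lambda_+\Lambda_-$ are all immediate, so the whole content of the theorem reduces to producing the explicit polynomial expression for $\Lambda_\pm$ in $\Lambda$.

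Apply theorem~\ref{thm:expsimple} to each simple $L_\pm$. Using $d_1^\pm=s_\pm$, $d_2^\pm\mu_\pm=c_\pm-1$, and $L_\pm^2=\mu_\pm P_\pm$ (with $P_\pm$ the projection of theorem~\ref{thm:decomp}, cf.\ lemma~\ref{lem:2plane}), this collapses to
$$\Lambda_\pm=I+s_\pm L_\pm+(c_\pm-1)P_\pm. \qquad(\dagger)$$
The annihilation rules $L_+L_-=L_\pm P_\mp=P_+P_-=0$, together with $L_\pm P_\pm=L_\pm$, $P_\pm^2=P_\pm$, $L_\pm^2=\mu_\pm P_\pm$, and the analytic identity $s_\pm^2\mu_\pm=c_\pm^2-1$ (from $\cosh^2-\sinh^2=1$ and $\cos^2+\sin^2=1$), let one express $\Lambda=\Lambda_+\Lambda_-$, $\Lambda^{-1}=\Lambda_+^{-1}\Lambda_-^{-1}$, and $\Lambda^2$ in terms of $\{I,L_+,L_-,P_+,P_-\}$; in particular,
$$\Lambda^2=I+2s_+c_+L_++2s_-c_-L_-+2(c_+^2-1)P_++2(c_-^2-1)P_-.$$

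The verification is then a finite coefficient check. Form $R_+\doteq(1+2c_+)I-\Lambda^{-1}-(1+2c_-)\Lambda+\Lambda^2$ and read off the coefficients of $I,L_+,L_-,P_+,P_-$. The coefficient of $L_-$ collapses to $s_-(1-1-2c_-+2c_-)=0$, and an analogous telescoping, using $c_-^2-1=(c_--1)(c_-+1)$, kills the $P_-$ coefficient, while the coefficients of $I$, $L_+$, $P_+$ each acquire the common factor $2(c_+-c_-)$. This gives
$$R_+=2(c_+-c_-)\bigl[I+s_+L_++(c_+-1)P_+\bigr]=2(c_+-c_-)\Lambda_+$$
by $(\dagger)$. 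Since $c_+>1>c_-$ for nonsimple $\Lambda$ (see the remark after lemma~\ref{lem:simplecriterion}), division by $2(c_+-c_-)$ yields the stated formula for $\Lambda_+$; the formula for $\Lambda_-$ follows by the manifest $+\leftrightarrow -$ symmetry of the statement, or equivalently by repeating the check with $R_-\doteq(1+2c_-)I-\Lambda^{-1}-(1+2c_+)\Lambda+\Lambda^2$.

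The main obstacle is the sign bookkeeping in that coefficient check: the vanishing of the $L_-$ and $P_-$ coefficients is exactly what ``selects'' the $+$ sector, and the identity $s_\pm^2\mu_\pm=c_\pm^2-1$ is what drives those cancellations. Beyond these and the annihilation rules nothing else is needed, so although the computation is long it is entirely mechanical.
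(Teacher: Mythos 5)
Your proof is correct, but it follows a genuinely different route from the paper's. The paper derives the formula in two short steps: it first observes that since $P_\pm$ is a projection commuting with $L$, one has $(P_\pm L)^n=P_\pm L^n$ for $n>0$, whence $\exp(P_\pm L)=I+P_\pm(\exp L-I)=P_\mp+P_\pm\Lambda$; it then simply substitutes the already-established closed form $P_\pm=\pm\bigl(\tfrac{1}{2}(\Lambda+\Lambda^{-1})-c_\mp I\bigr)/(c_+-c_-)$ from theorem~\ref{thm:lorentzproj}, and the formula for $\Lambda_\pm$ drops out at once, with no trigonometric/hyperbolic identities needed. You instead take the \emph{verification} route: you expand each factor via theorem~\ref{thm:expsimple} as $\Lambda_\pm=I+s_\pm L_\pm+(c_\pm-1)P_\pm$, express $\Lambda$, $\Lambda^{-1}$, $\Lambda^2$ in the span of $\{I,L_+,L_-,P_+,P_-\}$ using the annihilation rules plus $s_\pm^2\mu_\pm=c_\pm^2-1$, and then match coefficients against the claimed $R_\pm$. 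Your computations check out (in particular $L_\pm^2=\mu_\pm P_\pm$ is correct, as one sees from $L_\pm^2=P_\pm L^2$ together with lemma~\ref{lem:4}, and the $L_-$ and $P_-$ coefficients of $R_+$ do vanish). The paper's argument is shorter because it \emph{derives} rather than \emph{verifies}, leveraging the prior theorem about $P_\pm$; your argument is more self-contained but relies on the additional Pythagorean-type identity and a longer mechanical expansion. Both are valid.
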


\begin{proof}
Let $L$ be a nonsimple Lorentz bivector such that $\Lambda=\exp(L)$, and $P_\pm$ as in theorem \ref{thm:lorentzproj}.  Observe that $\exp(P_\pm L)=P_\mp+P_\pm\Lambda$.  Indeed, $(P_\pm L)^n=P_\pm^nL^n=P_\pm L^n$ for all $n>0$, since $P_\pm$ is a projection that commutes with $L$.   Thus, $\exp(P_\pm L)=I+\sum_{n>0}\frac{1}{n!}(P_\pm L)^n=I+P_\pm(\sum_{n>0}\frac{1}{n!}L^n)=I+P_\pm(\exp(L)-I)$.  The formulas for $\Lambda_\pm$ then follow from theorem \ref{thm:lorentzproj}.
\end{proof}

As we noted in section \ref{sssec:spin}, in the special case when $g=\eta$ is the Minkowski metric, we may write $L_\pm$ in terms of the Lie algebra homomorphism $\psi:{\it sl_2}{\mathbb C}\rightarrow{\it so}(\eta)$.  Namely, $\psi(m_+)=L_+$ and $\psi(i\theta m_+)=L_-$ for some traceless $2\times 2$ complex matrix $m_+=\bigl(\begin{smallmatrix} a & b\\ c & -a\end{smallmatrix}\bigr)$ such that $a^2+bc$ is real.  It follows that $M_+\doteq\exp(m_+)$ and $M_-\doteq\exp(i\theta m_+)$ give (commuting) matrices in ${\it SL_2}{\mathbb C}$ with $\Psi(M_\pm)=\Lambda_\pm$; i.e., $M_\pm$ gives the decomposition of $\Psi(M_+M_-)=\Lambda$ into simple factors.  Observe that $\exp(m_+)$ is readily computed, since $m_+^2=(a^2+bc)I$.

We may view the decomposition of a nonsimple Lorentz transformation into commuting simple factors as a generalization of Synge's {\it 4--screw:} a product of a boost and a rotation in orthogonal two--planes.  While it is known that every proper nonsimple Lorentz transformation can be expressed as the product of two commuting simple Lorentz transformations (see \cite{Schremp}), the formulas presented here are ostensibly original, and in any case independent of the specific form of the Lorentz metric.


\end{document}